\newtheorem{theorem}{Theorem}[section]
\newtheorem{lemma}[theorem]{Lemma}
\newtheorem{proposition}[theorem]{Proposition}
\title{Quantum singular value transformation for an arbitrary bounded operator embedded in a unitary operator}
\author[1]{Chusei Kiumi\thanks{\texttt{c.kiumi@osaka-u.ac.jp}}}
\author[2]{Akito Suzuki\thanks{\texttt{akito@shinshu-u.ac.jp}}}
\affil[1]{\footnotesize 
Graduate School of Engineering Science, Osaka University
\protect\\
Machikaneyama, Toyonaka, Osaka, 560-8531, Japan}
\affil[2]{\footnotesize Division of Mathematics and Physics, Faculty of Engineering, Shinshu University, 
\protect\\
Wakasato, Nagano 380-8553, Japan}
\begin{document}
\maketitle
\vspace{-0.8cm}
\begin{abstract}
    This research extends quantum singular value transformation (QSVT) for general bounded operators embedded in unitary operators on possibly infinite-dimensional Hilbert spaces. Through in-depth mathematical exploration, we have achieved a refined operator-theoretic understanding of QSVT, leading to a more streamlined approach. One of the key discoveries is that polynomial transformations in QSVT inherently apply to the entire operator, rather than being contingent on the selection of a specific basis. We expect that this research will pave the way for applying these insights to a broader range of problems in quantum information processing and provide analytical tools for quantum dynamics, such as quantum walks.
\end{abstract}
\section{Introduction}
Quantum singular value transformation (QSVT) represents a significant advancement in quantum computing \cite{LC19, QSVT, grand}, offering a unified framework for performing a wide array of quantum algorithms. This framework gave us a unified understanding of existing quantum algorithms such as search \cite{grover}, factoring \cite{shor}, simulation \cite{hamiltonian,LC17,LC19} and linear system solving \cite{HHL}, and even improved their performance. Additionally, QSVT serves as a foundational tool for further exploration in quantum algorithm, paving the way for new algorithms that could revolutionize various scientific and technological fields. By leveraging the quantum mechanical principles, QSVT not only underscores the potential of quantum computing in solving complex problems but also offers a deeper insight into the nature of quantum algorithms and their capabilities, marking a pivotal step in the ongoing evolution of quantum information science.

The roots of QSVT can be traced back to a key idea from Szegedy's quantum walk \cite{szegedy}. Szegedy's quantum walk, a quantum analogue of classical random walks, is notable for its ability to embed the singular values of a discriminant matrix constructed from classical random walks into a unitary matrix. This embedding is crucial as it bridges classical algorithmic structures with quantum processes, laying the groundwork for more advanced quantum algorithms based on Markov chains \cite{montanaro,pagerank} such as quantum fast-forwarding algorithm \cite{QFF} and spatial search algorithm on graphs \cite{Unified}. Building on this concept, QSVT expands its capabilities by embedding not just specific matrices derived from random walks, but any arbitrary matrix into a unitary operator. This flexibility is a significant enhancement over the traditional approach of Szegedy's walk. By combining this embedding technique with quantum signal processing (QSP), QSVT enables polynomial transformations of the embedded singular values. Such transformations provide a higher degree of control over the embedded matrix, allowing for more precise and versatile manipulations.

Szegedy's quantum walk also significantly impacts the spectral analysis of more general quantum walks. The spectral mapping theorem \cite{SMT}, a method that generalizes the idea of Szegedy's quantum walk for the more general quantum walks on potentially infinite graphs, provides a sophisticated framework for analyzing the spectral properties of quantum walks, crucial for understanding complex quantum systems such as localization and long time behavior \cite{Generator}. At the heart of the spectral mapping theorem are two self-adjoint operators: the coin and shift operators. These operators dictate the movement and state changes of the walk on the graph. The theorem intricately links the spectra of these operators with the discriminant operator, another self-adjoint operator, to derive the spectral characteristics of the entire quantum walk system. The spectral mapping theorem allows us to analyze various quantum walk. For example, \cite{crystal} analyze the spectrum of Grover walk on crystal lattice, the technique is also applied to study the periodicity of quantum walks on various graphs \cite{periodicity17,kubota19,kubota21}. Another example is the split-step quantum walk on lattices \cite{SS1,SS2}, which plays important role in the study of topological phases \cite{Kitagawa,kitagawa2}.

In this paper, we extend the framework of QSVT in a manner analogous to how the spectral mapping theorem generalizes Szegedy's quantum walk. Our research broadens the application of QSVT to include general bounded operators embedded within unitary operators on potentially infinite-dimensional Hilbert spaces. Through a rigorous mathematical investigation, we delve into the operator mechanics of QSVT, achieving an operator-theoretic understanding that gives simple interpretation. A key finding of our study reveals that the polynomial transformation characteristic of QSVT applies across the entire operator which does not depend on the choice of a basis.

The remainder of this paper is organized as follows. In Section \ref{sec:def}, we introduce the definition of the generalized QSVT operator for an arbitrary bounded operator, providing this definition from an operator-theoretic perspective. Section \ref{sec:poly} explains the operator's evolution through iterative QSVT application, demonstrating that this evolution is governed by polynomial transformation. This fact implies that we can prove singular value transformation without the need to consider a specific basis, unlike the approach in \cite{QSVT}, which constructed invariant subspaces for the consideration of qubitization. Section \ref{sec:qsp} examines the relationship between our generalized QSVT and the original QSVT \cite{QSVT}, focusing on aspects such as qubitization and QSP. Finally, Section \ref{sec:SMT} delves into the eigenvalues of the QSVT operator.

\section{Definition\label{sec:def}}

In this section, we define the operator that generalizes quantum singular value transformation. We consider two closed subspaces $\mathfrak{H}$ and $\mathfrak{K}$ of a possibly infinite dimensional Hilbert space $\mathcal{H}$. \ Let $U$ be an arbitrary unitary operator and has following representation from $\mathfrak{H} \oplus \mathfrak{H}^{\perp }$ to $\mathfrak{K} \oplus \mathfrak{K}^{\perp }$:
\begin{equation*}
    U=\begin{bmatrix}
        A & B \\
        C & D
    \end{bmatrix} .
\end{equation*}
Here, their domains and ranges are
\begin{equation*}
    A:\mathfrak{H}\rightarrow \mathfrak{K} ,\ B:\mathfrak{H}^{\perp }\rightarrow \mathfrak{K} ,\ C:\mathfrak{H}\rightarrow \mathfrak{K}^{\perp } ,\ D:\mathfrak{H}^{\perp }\rightarrow \mathfrak{K}^{\perp } .
\end{equation*}$A$ from $\mathfrak{H}$ to $\mathfrak{K}$, which is neither self-adjoint nor normal if $\mathfrak{H} \neq \mathfrak{K}$. Here we say $A$ is contraction if $\| A\| \leq 1$. If $A$ is not contraction, we can redefine $A$ as $A/\Vert A\Vert $. Due to the unitarity of $U$, we have
\begin{align*}
     & A^{*} A+C^{*} C=I_{\mathfrak{H}} ,\ B^{*} B+D^{*} D=I_{\mathfrak{H}^{\perp }}, \\
     & AA^{*} +BB^{*} =I_{\mathfrak{K}} ,\ CC^{*} +DD^{*} =I_{\mathfrak{K}^{\perp }},
\end{align*}
and
\begin{align*}
    A^{*} B+C^{*} D=0,\ B^{*} A+D^{*} C=0, \\
    AC^{*} +BD^{*} =0,\ CA^{*} +DB^{*} =0.
\end{align*}
We define projector $P$ by the conjugation of the projector on $\mathcal{H}$ onto $\mathfrak{K}$ by $U$, as follows:
\begin{equation*}
    P:=U^{*}\begin{bmatrix}
        1 & 0 \\
        0 & 0
    \end{bmatrix} U=\begin{bmatrix}
        A^{*} A & A^{*} B \\
        B^{*} A & B^{*} B
    \end{bmatrix} \ \text{on} \ \mathfrak{H} \oplus \mathfrak{H}^{\perp }
\end{equation*}
Also, let
\begin{equation*}
    \Delta :=[A,B]\ \text{on} \ \mathfrak{H} \oplus \mathfrak{H}^{\perp }\rightarrow \mathfrak{K} ,
\end{equation*}
then the conjugate of $\Delta $ is given as
\begin{equation*}
    \Delta ^{*} =\begin{bmatrix}
        A^{*} \\
        B^{*}
    \end{bmatrix} \ \mathfrak{\text{on} \ K}\rightarrow \mathfrak{H} \oplus \mathfrak{H}^{\perp } .
\end{equation*}
\begin{lemma}
    $\Delta $ is coisometry and satisfies
    \begin{equation*}
        \Delta \Delta ^{*} =I_{\mathfrak{K}} ,\ \Delta ^{*} \Delta =P.
    \end{equation*}
\end{lemma}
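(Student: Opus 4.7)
The plan is to verify both identities by direct block-matrix multiplication, relying only on the unitarity relations that were derived immediately above and on the definition of $P$. Since $\Delta$ and $\Delta^{*}$ are already written out in block form, no further structural setup is needed; the two claims reduce to recognizing specific $2\times2$ block products.

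First I would compute $\Delta\Delta^{*}$. Writing
\begin{equation*}
\Delta\Delta^{*}=[A,B]\begin{bmatrix}A^{*}\\ B^{*}\end{bmatrix}=AA^{*}+BB^{*},
\end{equation*}
and then invoking the unitarity relation $AA^{*}+BB^{*}=I_{\mathfrak{K}}$ (the third of the four identities derived from $UU^{*}=I$) immediately gives the coisometry property $\Delta\Delta^{*}=I_{\mathfrak{K}}$.

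Next I would compute $\Delta^{*}\Delta$ in the opposite order:
\begin{equation*}
\Delta^{*}\Delta=\begin{bmatrix}A^{*}\\ B^{*}\end{bmatrix}[A,B]=\begin{bmatrix}A^{*}A & A^{*}B\\ B^{*}A & B^{*}B\end{bmatrix}.
\end{equation*}
By the definition of $P$ given just above the lemma, the right-hand side is precisely $P$, so $\Delta^{*}\Delta=P$ follows without further work.

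There is essentially no obstacle here; the only thing to be careful about is bookkeeping of the domains and codomains in the block decomposition $\mathfrak{H}\oplus\mathfrak{H}^{\perp}\to\mathfrak{K}$, so that the two matrix products are indeed of the form written above (row times column on $\mathfrak{K}$, column times row on $\mathfrak{H}\oplus\mathfrak{H}^{\perp}$). Once this is checked, both equalities are immediate consequences of the already established unitarity identities and the explicit form of $P$.
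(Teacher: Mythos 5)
Your proof is correct and is exactly the computation the paper intends (the lemma is stated without proof precisely because it reduces to $\Delta\Delta^{*}=AA^{*}+BB^{*}=I_{\mathfrak{K}}$ and $\Delta^{*}\Delta$ matching the displayed block form of $P$). No issues.
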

Subsequently, for rotation angle $\theta \in [0,2\pi )$, we define following two unitary operators for each subspace $\mathfrak{K}$ and $\mathfrak{H}$:
\begin{equation*}
    R_{\mathfrak{K}}( \theta ) :=\begin{bmatrix}
        e^{i\theta } & 0             \\
        0            & e^{-i\theta }
    \end{bmatrix} \ \text{on} \ \mathcal{H} =\mathfrak{K} \oplus \mathfrak{K}^{\perp } ,
\end{equation*}
and
\begin{equation*}
    R_{\mathfrak{H}}( \theta ) :=\begin{bmatrix}
        e^{i\theta } & 0             \\
        0            & e^{-i\theta }
    \end{bmatrix} \ \text{on} \ \mathcal{H}=\mathfrak{H} \oplus \mathfrak{H}^{\perp } .
\end{equation*}
\begin{lemma}
    \begin{equation*}
        U^{*} R_{\mathfrak{K}}( \theta ) U=e^{i\theta } P+e^{-i\theta }( 1-P) .
    \end{equation*}
\end{lemma}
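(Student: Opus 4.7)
The plan is to reduce the identity to a direct computation using the definition of $P$ and the unitarity of $U$. First I would write $R_{\mathfrak{K}}(\theta)$ as a spectral decomposition in terms of the orthogonal projector $\Pi_{\mathfrak{K}} = \begin{bmatrix} 1 & 0 \\ 0 & 0 \end{bmatrix}$ onto $\mathfrak{K}$ (viewed as an operator on $\mathcal{H} = \mathfrak{K} \oplus \mathfrak{K}^\perp$), namely
\begin{equation*}
R_{\mathfrak{K}}(\theta) = e^{i\theta}\Pi_{\mathfrak{K}} + e^{-i\theta}(I_{\mathcal{H}} - \Pi_{\mathfrak{K}}).
\end{equation*}
This is immediate from the $2\times 2$ block form.

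Next, I would conjugate both sides by $U$ and use linearity of the map $X \mapsto U^* X U$, together with the fact that $U^* I_{\mathcal{H}} U = I_{\mathcal{H}}$ by unitarity. This yields
\begin{equation*}
U^* R_{\mathfrak{K}}(\theta) U = e^{i\theta} U^*\Pi_{\mathfrak{K}} U + e^{-i\theta}\bigl(I_{\mathcal{H}} - U^*\Pi_{\mathfrak{K}} U\bigr).
\end{equation*}
By the very definition of $P$ in the excerpt, $P = U^* \Pi_{\mathfrak{K}} U$, so substituting gives the stated formula $e^{i\theta} P + e^{-i\theta}(1-P)$.

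There is really no main obstacle here: the identity is essentially a restatement of the fact that conjugation by a unitary sends the spectral decomposition of $R_{\mathfrak{K}}(\theta)$ into the spectral decomposition of its conjugate, and $P$ was defined so that this conjugation turns $\Pi_{\mathfrak{K}}$ into $P$. The only point worth being explicit about is that $P$ is indeed an orthogonal projector on $\mathfrak{H}\oplus\mathfrak{H}^\perp$, so that $1-P$ makes sense and corresponds to the image of $I - \Pi_{\mathfrak{K}}$ under conjugation by $U^*$; this follows from $P^2 = U^*\Pi_{\mathfrak{K}} U U^*\Pi_{\mathfrak{K}} U = U^*\Pi_{\mathfrak{K}} U = P$ and $P^* = P$, which are both immediate.
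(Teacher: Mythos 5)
Your proof is correct and is exactly the argument the paper intends (the lemma is stated without proof there, being an immediate consequence of the definition $P = U^*\Pi_{\mathfrak{K}}U$): decompose $R_{\mathfrak{K}}(\theta)$ spectrally as $e^{i\theta}\Pi_{\mathfrak{K}} + e^{-i\theta}(I-\Pi_{\mathfrak{K}})$ and conjugate by $U$. The added observation that $P$ is an orthogonal projector is a harmless bonus, though not needed for the identity itself.
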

Finally, we define our generalized QSVT operator. For a positive integer $k\in \mathbb{Z}_{+}$, we define the unitary operator $W_k$ with parameters $\theta _{k} ,\phi _{k} \in [0,2\pi )$:
\begin{equation*}
    W_{k} =R_{\mathfrak{H}}( \theta _{k}) U^{*} R_{\mathfrak{K}}( \phi _{k}) U.
\end{equation*}
QSVT is done by iterating $W_{k}$ for parameters $\theta _{k} ,\phi _{k}$. This operator can be written as the following proposition. In this paper, we define $\mathbb{C}[ X]$ as a set of polynomials in the self-adjoint operator or the real number $X$ with complex coefficients.
\begin{proposition}
    $W_{k}$ has the following form on $\mathfrak{H} \oplus \mathfrak{H}^{\perp }$.
    \begin{equation*}
        W_{k} =\begin{bmatrix}
            P_{k}\left( A^{*} A\right)              & iQ_{k}\left( A^{*} A\right) A^{*} B \\
            iQ_{k}^{*}\left( D^{*} D\right) B^{*} A & P_{k}^{*}\left( D^{*} D\right)
        \end{bmatrix},
    \end{equation*}
    where $P_{k} ,Q_{k} \in \mathbb{C}[ X]$ are defined by
    \begin{equation*}
        P_{k}( x) :=e^{i\theta _{k}}\left( e^{-i\phi _{k}} +2i\sin \phi _{k} x\right) ,\ Q_{k}( x) :=2e^{i\theta _{k}}\sin \phi _{k} .\
    \end{equation*}
\end{proposition}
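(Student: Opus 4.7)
The plan is to reduce everything to Lemma~2.2 and then multiply on the left by the block-diagonal rotation $R_{\mathfrak{H}}(\theta_k)$, reading off each block. There is essentially no heavy computation; the only substantive manipulation is converting $B^{*}B$ to $D^{*}D$ in the bottom-right corner using unitarity of $U$.

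First, I would rewrite the inner factor using Lemma~2.2. Since $e^{i\phi_k}P + e^{-i\phi_k}(1-P) = e^{-i\phi_k}I + 2i\sin(\phi_k)\,P$, that lemma gives
\begin{equation*}
U^{*} R_{\mathfrak{K}}(\phi_k) U \;=\; e^{-i\phi_k} I + 2i\sin(\phi_k)\, P.
\end{equation*}
Plugging in the explicit block form of $P$ on $\mathfrak{H}\oplus\mathfrak{H}^{\perp}$ immediately yields a $2\times 2$ block expression whose top-left, top-right, bottom-left, bottom-right entries are $e^{-i\phi_k}+2i\sin(\phi_k)A^{*}A$, $\;2i\sin(\phi_k)A^{*}B$, $\;2i\sin(\phi_k)B^{*}A$, and $e^{-i\phi_k}+2i\sin(\phi_k)B^{*}B$ respectively.

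Next I would left-multiply by $R_{\mathfrak{H}}(\theta_k)=\mathrm{diag}(e^{i\theta_k},e^{-i\theta_k})$. Applying $e^{i\theta_k}$ to the top row produces exactly $P_k(A^{*}A)$ in the $(1,1)$-slot and $iQ_k(A^{*}A)\,A^{*}B = 2ie^{i\theta_k}\sin(\phi_k)\,A^{*}B$ in the $(1,2)$-slot (recalling that $Q_k$ is a constant polynomial). Applying $e^{-i\theta_k}$ to the bottom row gives $iQ_k^{*}(D^{*}D)\,B^{*}A$ in the $(2,1)$-slot directly, and leaves the $(2,2)$-slot as $e^{-i\theta_k}\bigl(e^{-i\phi_k}+2i\sin(\phi_k)B^{*}B\bigr)$.

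The only nontrivial identification is that last block. Here I would invoke the unitarity relation $B^{*}B+D^{*}D=I_{\mathfrak{H}^{\perp}}$ to write
\begin{equation*}
e^{-i\phi_k}+2i\sin(\phi_k)B^{*}B \;=\; e^{-i\phi_k}+2i\sin(\phi_k)-2i\sin(\phi_k)D^{*}D \;=\; e^{i\phi_k}-2i\sin(\phi_k)D^{*}D,
\end{equation*}
so that the $(2,2)$-entry becomes $e^{-i\theta_k}\bigl(e^{i\phi_k}-2i\sin(\phi_k)D^{*}D\bigr) = P_k^{*}(D^{*}D)$, matching the claim. If anything is an ``obstacle'' it is just keeping careful track of the conjugation here: one has to notice that the substitution $B^{*}B \mapsto I-D^{*}D$ converts the coefficient $e^{-i\phi_k}$ into $e^{i\phi_k}$ with the sign of $2i\sin(\phi_k)$ flipped, which is precisely what produces $P_k^{*}$ rather than $P_k$ in the bottom-right. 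Once that is observed, the four blocks are all in place and the proposition is established.
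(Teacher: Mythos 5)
Your proof is correct and follows exactly the route the paper sets up (the paper omits an explicit proof, but Lemma 2.2 and the block form of $P$ are stated immediately beforehand precisely so that $W_k = R_{\mathfrak{H}}(\theta_k)\bigl(e^{-i\phi_k}I + 2i\sin(\phi_k)P\bigr)$ can be read off blockwise). Your handling of the $(2,2)$-entry via $B^{*}B = I - D^{*}D$ is the right and only nontrivial step, and the conjugation bookkeeping yielding $P_k^{*}$ is accurate.
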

Note that $Q_{k}$ in the above proposition is just a constant value, we emphasize this form for the discussion in the next section.

\section{Polynomial transformation of operators\label{sec:poly}}
In this section, we consider how our QSVT operator evolves by iterations with parameters $\theta _{k} ,\phi _{k}$. Here, we demonstrate that the operator undergoes a transformation by a polynomial applied to the self-adjoint and positive semi-definite operators $A^{*} A$ and $D^{*} D$. Our analysis includes both even iterations, represented by $\prod _{k=1}^{n} W_{k}$ and odd iterations, denoted by $R_{\mathfrak{K}}( \phi _{n+1}) U\prod _{k=1}^{n} W_{k}$ as discussed in \cite{QSVT}.
\begin{proposition}
    $\prod _{k=1}^{n} W_{k}$ has following representation on $\mathfrak{H} \oplus \mathfrak{H}^{\perp }$ with \ $\Pi _{n} ,\Phi _{n} \in \mathbb{C}[ X]$:
    \begin{equation*}
        \prod _{k=1}^{n} W_{k} =\begin{bmatrix}
            \Pi _{n}\left( A^{*} A\right)               & i\Phi _{n}\left( A^{*} A\right) A^{*} B \\
            i\Phi _{n}^{*}\left( D^{*} D\right) B^{*} A & \Pi _{n}^{*}\left( D^{*} D\right),
        \end{bmatrix}
    \end{equation*}
    where $n\geqslant 1$. Also, let $\Pi _{0} \equiv 1,\ \Phi _{0} \equiv 0$, then following holds:

    \begin{equation*}
        \begin{bmatrix}
            \Pi _{n}( x) \\
            \Phi _{n}^{*}( x)
        \end{bmatrix} =\begin{bmatrix}
            P_{n}( x)     & -Q_{n}( x)( 1-x) x \\
            Q_{n}^{*}( x) & P_{n}^{*}( x)
        \end{bmatrix}\begin{bmatrix}
            \Pi _{n-1}( x) \\
            \Phi _{n-1}^{*}( x)
        \end{bmatrix} .
    \end{equation*}
\end{proposition}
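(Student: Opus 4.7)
The plan is to induct on $n$, with the base case $n=1$ supplied by the previous proposition (namely $\Pi_1=P_1$, $\Phi_1=Q_1$, matching the recursion when one takes $\Pi_0\equiv 1$, $\Phi_0\equiv 0$). For the inductive step I write the product in the order suggested by the recursion, $\prod_{k=1}^{n}W_k = W_n \cdot \prod_{k=1}^{n-1}W_k$, and compute the four block entries of the $2\times 2$ block matrix product using the inductive hypothesis for the factor on the right and the known form of $W_n$.

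The computational engine is a short list of intertwining identities that follow from the unitarity relations in Section~\ref{sec:def} plus the elementary fact $Af(A^{*}A)=f(AA^{*})A$ valid for any polynomial $f$. Concretely, from $B^{*}B+D^{*}D=I_{\mathfrak{H}^{\perp}}$ and $AA^{*}+BB^{*}=I_{\mathfrak{K}}$, writing $g(D^{*}D)=g(I-B^{*}B)$ and applying the intertwining identity to $B$ yields
\begin{equation*}
  A^{*}Bg(D^{*}D)=g(A^{*}A)A^{*}B,\qquad g(D^{*}D)B^{*}A=B^{*}Ag(A^{*}A),
\end{equation*}
together with the two ``absorption'' identities
\begin{equation*}
  A^{*}BB^{*}A=A^{*}A\,(I-A^{*}A),\qquad B^{*}AA^{*}B=D^{*}D\,(I-D^{*}D).
\end{equation*}
These are exactly what is needed to convert every mixed product of $A^{*}B$, $B^{*}A$ and polynomials of $D^{*}D$ into a clean polynomial in $A^{*}A$ (or $D^{*}D$) multiplied on the appropriate side by $A^{*}B$, $B^{*}A$, or $A^{*}A(I-A^{*}A)$.

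With the identities above, the $(1,1)$ entry of $W_n\cdot\prod_{k=1}^{n-1}W_k$ becomes
\begin{equation*}
  P_n(A^{*}A)\Pi_{n-1}(A^{*}A)-Q_n(A^{*}A)\Phi_{n-1}^{*}(A^{*}A)\,A^{*}A(I-A^{*}A),
\end{equation*}
which is precisely $\Pi_n(A^{*}A)$ as prescribed by the recursion; the $(2,1)$ entry, after moving $\Pi_{n-1}(A^{*}A)$ past $B^{*}A$, collapses to $i\bigl(Q_n^{*}\Pi_{n-1}+P_n^{*}\Phi_{n-1}^{*}\bigr)(D^{*}D)\,B^{*}A = i\Phi_n^{*}(D^{*}D)B^{*}A$. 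The $(1,2)$ and $(2,2)$ entries are handled by the same mechanism and are in fact forced by taking adjoints of the first column, since the recursion matrix and its $(*,-)$-conjugate carry $\Pi_{n-1},\Phi_{n-1}^{*}$ to $\Pi_n,\Phi_n^{*}$ and $\Pi_{n-1}^{*},\Phi_{n-1}$ to $\Pi_n^{*},\Phi_n$ simultaneously. This closes the induction.

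The only step where I expect to have to be careful is the bookkeeping in the mixed cross terms, specifically ensuring that polynomials of $D^{*}D$ are translated via $D^{*}D=I-B^{*}B$ before any intertwining identity is invoked; once that translation is made the calculation is essentially forced. No ``hard'' step is involved, but any sign error or mis-ordered commutation in the cross terms would propagate into a wrong recursion, so verifying the consistency of the two columns (via the involution $(\cdot)^{*}$ on $\mathbb{C}[X]$) serves as a useful internal check.
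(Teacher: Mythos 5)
Your proposal is correct and follows essentially the same route as the paper: induction on $n$ with $\prod_{k=1}^{n}W_k = W_n\prod_{k=1}^{n-1}W_k$, using the unitarity relations (e.g.\ $AA^{*}+BB^{*}=I_{\mathfrak{K}}$, $B^{*}B+D^{*}D=I_{\mathfrak{H}^{\perp}}$) together with the intertwining identity $Af(A^{*}A)=f(AA^{*})A$ to reduce the cross terms to $A^{*}A(I-A^{*}A)$ and $D^{*}D(I-D^{*}D)$, yielding exactly the stated recursion. The identities you list are precisely the ones the paper computes explicitly for the $(1,1)$ and $(2,2)$ entries.
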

\begin{proof}
    $W_{0} =I$ and $W_{1}$ trivially has the desired form. We assume that
    \begin{equation*}
        \prod _{k=1}^{n-1} W_{k} =\begin{bmatrix}
            \Pi _{n-1}\left( A^{*} A\right)               & i\Phi _{n-1}\left( A^{*} A\right) A^{*} B \\
            i\Phi _{n-1}^{*}\left( D^{*} D\right) B^{*} A & \Pi _{n-1}^{*}\left( D^{*} D\right),
        \end{bmatrix}
    \end{equation*}
    where $\Pi ,\Phi \in \mathbb{C}[ X]$. \ By direct calculation, we can check that this representation preserve under multiplication, which means
    \begin{align*}
        \prod _{k=1}^{n} W_{k} & =W_{n}\prod _{k=1}^{n-1} W_{k}                                                                          \\
                               & =\begin{bmatrix}
                                      P_{n}\left( A^{*} A\right)              & iQ_{n}\left( A^{*} A\right) A^{*} B \\
                                      iQ_{n}^{*}\left( D^{*} D\right) B^{*} A & P_{n}^{*}\left( D^{*} D\right)
                                  \end{bmatrix}\begin{bmatrix}
                                                   \Pi _{n-1}\left( A^{*} A\right)               & i\Phi _{n-1}\left( A^{*} A\right) A^{*} B \\
                                                   i\Phi _{n-1}^{*}\left( D^{*} D\right) B^{*} A & \Pi _{n-1}^{*}\left( D^{*} D\right)
                                               \end{bmatrix} \\
                               & =\begin{bmatrix}
                                      \Pi _{n}\left( A^{*} A\right)               & i\Phi _{n}\left( A^{*} A\right) A^{*} B \\
                                      i\Phi _{n}^{*}\left( D^{*} D\right) B^{*} A & \Pi _{n}^{*}\left( D^{*} D\right)
                                  \end{bmatrix}
    \end{align*}
    where
    \begin{align*}
         & \Pi _{n}( x) =P_{n}( x) \Pi _{n-1}( x) -Q_{n}( x) \Phi _{n-1}^{*}( x)( 1-x) x, \\
         & \Phi _{n}( x) =P_{n}( x) \Phi _{n-1}( x) +Q_{n}( x) \Pi _{n-1}^{*}( x) .
    \end{align*}
    In the above calculation, we use the unitarity relations multiple times, for example in the above calculation for the top-left entry $\Pi _{n}\left( A^{*} A\right)$, we use
    \begin{equation*}
        \begin{aligned}
            A^{*} B\Phi _{n-1}^{*}\left( D^{*} D\right) B^{*} A= & A^{*} B\Phi _{n-1}^{*}\left( 1-B^{*} B\right) B^{*} A                \\
            =                                                    & A^{*} \Phi _{n-1}^{*}\left( 1-BB^{*}\right) BB^{*} A                 \\
            =                                                    & A^{*} \Phi _{n-1}^{*}\left( AA^{*}\right) BB^{*} A                   \\
            =                                                    & \Phi _{n-1}^{*}\left( A^{*} A\right)\left( 1-A^{*} A\right) A^{*} A,
        \end{aligned}
    \end{equation*}
    for the bottom-right entry $\Pi _{n}^{*}\left( D^{*} D\right)$, we use \
    \begin{align*}
        B^{*} A\Phi _{n-1}\left( A^{*} A\right) A^{*} B & =B^{*} \Phi _{n-1}\left( AA^{*}\right) AA^{*} B                   \\
                                                        & =\Phi _{n-1}\left( 1-B^{*} B\right) B^{*} AA^{*} B                \\
                                                        & =\Phi _{n-1}\left( D^{*} D\right) B^{*} B\left( 1-B^{*} B\right)  \\
                                                        & =\Phi _{n-1}\left( D^{*} D\right)\left( 1-D^{*} D\right) D^{*} D.
    \end{align*}
    The other elements can be derived similarly. Therefore, we obtain
    \begin{equation*}
        \begin{bmatrix}
            \Pi _{n}( x) \\
            \Phi _{n}^{*}( x)
        \end{bmatrix} =\begin{bmatrix}
            P_{n}( x)     & -Q_{n}( x)( 1-x) x \\
            Q_{n}^{*}( x) & P_{n}^{*}( x)
        \end{bmatrix}\begin{bmatrix}
            \Pi _{n-1}( x) \\
            \Phi _{n-1}^{*}( x)
        \end{bmatrix} .
    \end{equation*}
\end{proof}
Next, we consider the case of odd iteration.
\begin{proposition}
    $R_{\mathfrak{K}}( \phi _{n+1}) U\prod _{k=1}^{n} W_{k}$ has following representation from $\mathfrak{H} \oplus \mathfrak{H}^{\perp }$ to $\mathfrak{K} \oplus \mathfrak{K}^{\perp }$ with \ $\Theta _{n} ,\Omega _{n} \in \mathbb{C}[ X]$:
    \begin{equation*}
        R_{\mathfrak{K}}( \phi _{n+1}) U\prod _{k=1}^{n} W_{k} =\begin{bmatrix}
            \Theta _{n}\left( AA^{*}\right) A     & \Omega _{n}\left( AA^{*}\right) B     \\
            \Omega _{n}^{*}\left( DD^{*}\right) C & \Theta _{n}^{*}\left( DD^{*}\right) D
        \end{bmatrix},
    \end{equation*}
    where
    \begin{align*}
         & \Theta _{n}( x) =e^{i\phi _{n+1}}\left( \Pi _{n}( x) +i\Phi _{n}^{*}( x)( 1-x)\right) , \\
         & \Omega _{n}( x) =e^{i\phi _{n+1}}\left( \Pi _{n}^{*}( x) +i\Phi _{n}( x) x\right) .
    \end{align*}
    \begin{proof}
        We directly calculate the following multiplication,
        \begin{align*}
            R_{\mathfrak{K}}( \phi _{n+1}) U\prod _{k=1}^{n} W_{k} & =\begin{bmatrix}
                                                                          e^{i\phi _{n+1}} A  & e^{i\phi _{n+1}} B  \\
                                                                          e^{-i\phi _{n+1}} C & e^{-i\phi _{n+1}} D
                                                                      \end{bmatrix}\begin{bmatrix}
                                                                                       \Pi _{n}\left( A^{*} A\right)               & i\Phi _{n}\left( A^{*} A\right) A^{*} B \\
                                                                                       i\Phi _{n}^{*}\left( D^{*} D\right) B^{*} A & \Pi _{n}^{*}\left( D^{*} D\right)
                                                                                   \end{bmatrix}.
        \end{align*}
        Detailed calculations for top-left element is
        \begin{align*}
             & A\Pi _{n}\left( A^{*} A\right) +Bi\Phi _{n}^{*}\left( D^{*} D\right) B^{*} A                             \\
             & =\left( \Pi _{n}\left( AA^{*}\right) +Bi\Phi _{n}^{*}\left( 1-B^{*} B\right) B^{*}\right) A              \\
             & =\left( \Pi _{n}\left( AA^{*}\right) +i\Phi _{n}^{*}\left( 1-BB^{*}\right) BB^{*}\right) A               \\
             & =\left( \Pi _{n}\left( AA^{*}\right) +i\Phi _{n}^{*}\left( AA^{*}\right)\left( 1-AA^{*}\right)\right) A,
        \end{align*}
        and for bottom-right is
        \begin{align*}
              & Ci\Phi _{n}\left( A^{*} A\right) A^{*} B+D\Pi _{n}^{*}\left( D^{*} D\right)                              \\
            = & -Ci\Phi _{n}\left( 1-C^{*} C\right) C^{*} D+\Pi _{n}^{*}\left( DD^{*}\right) D                           \\
            = & \left( -i\Phi _{n}\left( 1-CC^{*}\right) CC^{*} +\Pi _{n}^{*}\left( DD^{*}\right)\right) D               \\
            = & \left( -i\Phi _{n}\left( DD^{*}\right)\left( 1-DD^{*}\right) +\Pi _{n}^{*}\left( DD^{*}\right)\right) D.
        \end{align*}
        The calculations for the other elements can be proved similarly.
    \end{proof}
\end{proposition}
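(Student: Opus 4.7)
The plan is to multiply out the block product $R_{\mathfrak{K}}(\phi_{n+1}) U \cdot \prod_{k=1}^n W_k$ directly. Since $R_{\mathfrak{K}}(\phi_{n+1})$ is diagonal on $\mathfrak{K}\oplus\mathfrak{K}^\perp$, the composition $R_{\mathfrak{K}}(\phi_{n+1}) U$ has block rows $(e^{i\phi_{n+1}} A,\ e^{i\phi_{n+1}} B)$ and $(e^{-i\phi_{n+1}} C,\ e^{-i\phi_{n+1}} D)$. Multiplying by the block form of $\prod_{k=1}^n W_k$ from the previous proposition produces four entries, each a sum of two operator products in which $\Pi_n(A^*A)$, $\Phi_n(A^*A)$, $\Pi_n^*(D^*D)$, or $\Phi_n^*(D^*D)$ is sandwiched between the blocks of $U$. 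Each such entry is then reduced using two tools: the polynomial intertwining $X f(X^*X) = f(XX^*) X$ for $X \in \{A,B,C,D\}$, and the unitarity-derived identities $C^*C = I - A^*A$, $B^*B = I - D^*D$, $BB^* = I - AA^*$, $CC^* = I - DD^*$, together with the off-diagonal relations $A^*B = -C^*D$ and $B^*A = -D^*C$.

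For the top-left and bottom-right entries, the reductions are exactly the ones sketched inline in the excerpt: after intertwining and applying $BB^* = I - AA^*$ (respectively $CC^* = I - DD^*$ together with $A^*B = -C^*D$), the two summands collapse into $\bigl(\Pi_n(AA^*) + i\Phi_n^*(AA^*)(1-AA^*)\bigr) A$ and $\bigl(\Pi_n^*(DD^*) - i\Phi_n(DD^*)(1-DD^*)\bigr) D$, which after multiplication by $e^{\pm i\phi_{n+1}}$ are precisely $\Theta_n(AA^*) A$ and $\Theta_n^*(DD^*) D$. For the top-right entry $e^{i\phi_{n+1}}\bigl(iA\Phi_n(A^*A)A^*B + B\Pi_n^*(D^*D)\bigr)$ I would intertwine $A\Phi_n(A^*A)A^* = \Phi_n(AA^*) AA^*$ in the first summand, and in the second use $D^*D = I - B^*B$ together with intertwining through $B$ to write $B\Pi_n^*(D^*D) = \Pi_n^*(AA^*) B$; factoring $B$ on the right then yields $\Omega_n(AA^*) B$ with $\Omega_n(x) = e^{i\phi_{n+1}}(\Pi_n^*(x) + i\Phi_n(x) x)$. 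The bottom-left entry $e^{-i\phi_{n+1}}\bigl(C\Pi_n(A^*A) + iD\Phi_n^*(D^*D)B^*A\bigr)$ is handled symmetrically: $C\Pi_n(A^*A) = \Pi_n(DD^*) C$ via $A^*A = I - C^*C$ and $1 - CC^* = DD^*$, while $B^*A = -D^*C$ turns the second summand into $-i\Phi_n^*(DD^*) DD^* C$, and combining gives $\Omega_n^*(DD^*) C$ in accordance with the formula for $\Omega_n$.

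The main obstacle is not any single nontrivial step but rather the systematic bookkeeping across the four entries: for each of the two summands one must choose the correct intertwining direction (between $X^*X$ and $XX^*$) and the correct unitarity substitution so that everything assembles into a single polynomial of the designated positive self-adjoint operator ($AA^*$ for the top row, $DD^*$ for the bottom) applied to the exposed block $A$, $B$, $C$, or $D$ on the right. Once these matchings are made consistently, the expressions for $\Theta_n$ and $\Omega_n$ can be read off by direct inspection from the formulas for $\Pi_n$ and $\Phi_n$.
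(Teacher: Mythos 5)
Your proposal is correct and follows essentially the same route as the paper: direct block multiplication of $R_{\mathfrak{K}}(\phi_{n+1})U$ against the block form of $\prod_{k=1}^n W_k$, followed by the intertwining identities $Xf(X^*X)=f(XX^*)X$ and the unitarity relations to collapse each entry into a single polynomial acting on $AA^*$ or $DD^*$. Your treatment of the off-diagonal entries correctly fills in the details the paper dismisses as "similar," and the resulting formulas for $\Theta_n$ and $\Omega_n$ match.
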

For later discussion in Section \ref{sec:qsp}, we note the following lemma.
\begin{lemma}
    \begin{align*}
         & \Pi _{n}( 0) =\prod _{k=1}^{n} e^{i( \theta _{k} -\phi _{k})} ,\ \Pi _{n}( 1) =\prod _{k=1}^{n} e^{i( \theta _{k} +\phi _{k})},                                      \\
         & \Omega _{n}( 0) =e^{i\phi _{n+1}}\prod _{k=1}^{n} e^{-i( \theta _{k} -\phi _{k})} ,\Theta _{n}( 1) =e^{i\phi _{n+1}}\prod _{k=1}^{n} e^{i( \theta _{k} +\phi _{k})}.
    \end{align*}
\end{lemma}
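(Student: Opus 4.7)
The plan is to compute the four values directly from the two recurrences established in the previous propositions, exploiting the fact that both $x=0$ and $x=1$ are roots of the factor $(1-x)x$ that appears in the off-diagonal entries of the transfer matrix.

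First I would record the boundary values of the building blocks. From $P_{k}(x)=e^{i\theta_{k}}(e^{-i\phi_{k}}+2i\sin\phi_{k}\,x)$ one gets $P_{k}(0)=e^{i(\theta_{k}-\phi_{k})}$ and, using $e^{-i\phi_{k}}+2i\sin\phi_{k}=e^{i\phi_{k}}$, also $P_{k}(1)=e^{i(\theta_{k}+\phi_{k})}$. The factor $Q_{k}(x)=2e^{i\theta_{k}}\sin\phi_{k}$ is constant and, importantly, multiplies $(1-x)x$ in the recurrence for $\Pi_{n}$.

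Next I would evaluate the matrix recurrence of Proposition~2.4 at $x=0$ and $x=1$. Because $(1-x)x$ vanishes in both cases, the $(1,2)$ entry of the transfer matrix disappears and the recurrence decouples: $\Pi_{n}(0)=P_{n}(0)\Pi_{n-1}(0)$ and $\Pi_{n}(1)=P_{n}(1)\Pi_{n-1}(1)$. A straightforward induction starting from $\Pi_{0}\equiv 1$ then yields the two claimed product formulas for $\Pi_{n}(0)$ and $\Pi_{n}(1)$.

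Finally, for $\Theta_{n}$ and $\Omega_{n}$ I would use their defining expressions from Proposition~2.5. Evaluating $\Theta_{n}(x)=e^{i\phi_{n+1}}\bigl(\Pi_{n}(x)+i\Phi_{n}^{*}(x)(1-x)\bigr)$ at $x=1$ makes the $\Phi_{n}^{*}$-term disappear, giving $\Theta_{n}(1)=e^{i\phi_{n+1}}\Pi_{n}(1)$. Evaluating $\Omega_{n}(x)=e^{i\phi_{n+1}}\bigl(\Pi_{n}^{*}(x)+i\Phi_{n}(x)x\bigr)$ at $x=0$ kills the $\Phi_{n}$-term, giving $\Omega_{n}(0)=e^{i\phi_{n+1}}\Pi_{n}^{*}(0)=e^{i\phi_{n+1}}\overline{\Pi_{n}(0)}$. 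Substituting the induction results of the previous step then produces the two remaining formulas.

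There is no real obstacle here; the only thing worth emphasizing is that the argument is essentially a boundary-value calculation, made possible by the joint vanishing of $(1-x)$ at $x=1$ and $x$ at $x=0$, which decouples the diagonal and off-diagonal parts of the recurrence. Thus the lemma follows entirely from Propositions~2.4 and~2.5 by a short induction, requiring no further operator-theoretic input.
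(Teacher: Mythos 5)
Your proof is correct and is exactly the boundary-value argument the paper intends (the paper states this lemma without proof): the factor $(1-x)x$ kills the off-diagonal term of the recurrence at $x=0,1$, so $\Pi_n$ reduces to the product of $P_k(0)=e^{i(\theta_k-\phi_k)}$ resp.\ $P_k(1)=e^{i(\theta_k+\phi_k)}$, and the $\Theta_n,\Omega_n$ formulas follow by the same vanishing in Proposition 3.2. The only nitpick is your references to ``Proposition 2.4'' and ``Proposition 2.5,'' which should be the even- and odd-iteration propositions of Section 3 (Propositions 3.1 and 3.2).
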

Here, we show the polynomial transformation of the singular value of $A$ for our QSVT operator.
\begin{theorem}
    For $AA^{*} f_{\sigma } =\sigma ^{2} f_{\sigma } ,A^{*} Ah_{\sigma } =\sigma ^{2} h_{\sigma }\ ( \| f_{\sigma } \| =\| h_{\sigma } \| =1)$, the following singular value transformations are observed

    \begin{itemize}
        \item For even iteration,
              \begin{equation*}
                  \left< h_{\sigma } ,\prod _{k=1}^{n} W_{k} h_{\sigma }\right> =\Pi _{n}\left( \sigma ^{2}\right)
              \end{equation*}
        \item For odd iteration,
              \begin{equation*}
                  \left< f_{\sigma } ,R_{\mathfrak{K}}( \phi _{n+1}) U\prod _{k=1}^{n} W_{k} h_{\sigma }\right> =\left< f_{\sigma } ,\Theta _{n}\left( \sigma ^{2}\right) Ah_{\sigma }\right> =\Theta _{n}\left( \sigma ^{2}\right) \sigma
              \end{equation*}
    \end{itemize}
\end{theorem}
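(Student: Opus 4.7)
The theorem is essentially a direct readout of the block-matrix representations established in the two propositions of Section~\ref{sec:poly}, combined with the spectral/functional calculus for the positive self-adjoint operators $A^{*}A$ and $AA^{*}$. My plan is to handle the even and odd cases separately, in each case (i) decomposing the test vector according to $\mathfrak{H}\oplus\mathfrak{H}^{\perp}$ or $\mathfrak{K}\oplus\mathfrak{K}^{\perp}$, (ii) invoking the earlier block-matrix formula, and (iii) reducing the polynomial in the operator to a polynomial in the scalar $\sigma^{2}$ by the eigenvalue relation.

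For the even iteration, note that $h_{\sigma}\in\mathfrak{H}$, so in the $\mathfrak{H}\oplus\mathfrak{H}^{\perp}$ decomposition $h_{\sigma}$ lies entirely in the first summand. Applying the formula
\[
\prod_{k=1}^{n} W_{k} = \begin{bmatrix} \Pi_{n}(A^{*}A) & i\Phi_{n}(A^{*}A)A^{*}B \\ i\Phi_{n}^{*}(D^{*}D)B^{*}A & \Pi_{n}^{*}(D^{*}D) \end{bmatrix},
\]
the $\mathfrak{H}$-component of $\prod_{k=1}^{n} W_{k} h_{\sigma}$ is $\Pi_{n}(A^{*}A)h_{\sigma}$ and the $\mathfrak{H}^{\perp}$-component is $i\Phi_{n}^{*}(D^{*}D)B^{*}A h_{\sigma}$. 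Taking the inner product against $h_{\sigma}\in\mathfrak{H}$ kills the $\mathfrak{H}^{\perp}$ part, and since $A^{*}A h_{\sigma}=\sigma^{2}h_{\sigma}$, every power $(A^{*}A)^{j}$ acts on $h_{\sigma}$ as multiplication by $\sigma^{2j}$, so $\Pi_{n}(A^{*}A)h_{\sigma}=\Pi_{n}(\sigma^{2})h_{\sigma}$. Combined with $\|h_{\sigma}\|=1$, this gives $\langle h_{\sigma},\prod_{k=1}^{n} W_{k} h_{\sigma}\rangle=\Pi_{n}(\sigma^{2})$.

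For the odd iteration, I apply the second proposition to the same vector $h_{\sigma}\in\mathfrak{H}$. The $\mathfrak{K}$-component of $R_{\mathfrak{K}}(\phi_{n+1})U\prod_{k=1}^{n} W_{k} h_{\sigma}$ is $\Theta_{n}(AA^{*})A h_{\sigma}$. Here the subtle but crucial step is to move from the eigenvalue equation $A^{*}A h_{\sigma}=\sigma^{2} h_{\sigma}$ to an eigenvalue equation for $AA^{*}$ on the vector $A h_{\sigma}\in\mathfrak{K}$: indeed $AA^{*}(Ah_{\sigma})=A(A^{*}A)h_{\sigma}=\sigma^{2}Ah_{\sigma}$, so $A h_{\sigma}$ is an eigenvector of $AA^{*}$ with eigenvalue $\sigma^{2}$, and hence $\Theta_{n}(AA^{*})A h_{\sigma}=\Theta_{n}(\sigma^{2})A h_{\sigma}$. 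Taking the inner product against $f_{\sigma}\in\mathfrak{K}$ discards the $\mathfrak{K}^{\perp}$-component, yielding the first equality $\langle f_{\sigma},R_{\mathfrak{K}}(\phi_{n+1})U\prod_{k=1}^{n} W_{k} h_{\sigma}\rangle = \langle f_{\sigma},\Theta_{n}(\sigma^{2})A h_{\sigma}\rangle$.

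The only remaining point is the scalar identity $\langle f_{\sigma}, A h_{\sigma}\rangle=\sigma$, which gives the final equality $\Theta_{n}(\sigma^{2})\sigma$. This is where one has to invoke the singular value decomposition pairing: with $\sigma>0$, the vector $Ah_{\sigma}/\sigma$ is a unit eigenvector of $AA^{*}$ with eigenvalue $\sigma^{2}$ (its norm equals $\sqrt{\langle h_{\sigma},A^{*}Ah_{\sigma}\rangle}/\sigma=1$), and $f_{\sigma}$ is to be taken as this canonical SVD partner, so $\langle f_{\sigma},Ah_{\sigma}\rangle=\sigma\|f_{\sigma}\|^{2}=\sigma$. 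I expect this normalization convention to be the only potentially nontrivial point; everything else is a mechanical application of the block formulas and the polynomial functional calculus already proved.
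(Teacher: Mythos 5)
Your proof is correct and follows exactly the route the paper intends: the theorem is an immediate readout of the block representations of $\prod_{k=1}^{n}W_{k}$ and $R_{\mathfrak{K}}(\phi_{n+1})U\prod_{k=1}^{n}W_{k}$ applied to $h_{\sigma}\in\mathfrak{H}$, with the functional calculus replacing $A^{*}A$ (resp.\ $AA^{*}$ on $Ah_{\sigma}$) by $\sigma^{2}$; the paper itself supplies no separate proof. Your observation that the final equality $\langle f_{\sigma},Ah_{\sigma}\rangle=\sigma$ requires the SVD pairing convention $f_{\sigma}=Ah_{\sigma}/\sigma$ is a valid and worthwhile clarification of a point the statement leaves implicit.
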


\section{Invariant subspaces}
In this section, we consider the invariant subspace of our QSVT operator $W_{k}$, which is essential for understanding qubitization studied in \cite{QSVT} and spectral mapping studied in \cite{SMT}. We define the following subspace of $\mathcal{H}$
\begin{align*}
    \mathcal{L} & :=\left\{\begin{bmatrix}
                               A^{*} f \\
                               B^{*} g
                           \end{bmatrix} \in \mathfrak{H} \oplus \mathfrak{H}^{\perp }\middle| f,g\in \mathfrak{K}\right\}
\end{align*}
Then the orthogonal complement of the subspace can be given by following lemma.
\begin{lemma}
    \begin{equation*}
        \mathcal{L}^{\perp } =\left\{\begin{bmatrix}
            \Phi _{1} \\
            \Phi _{2}
        \end{bmatrix} \in \mathfrak{H} \oplus \mathfrak{H}^{\perp }\middle| \Phi _{1} \in \ker A,\Phi _{2} \in \ker B\right\}
    \end{equation*}
    \begin{proof}
        It is obvious that for all $f,g\in \mathfrak{K}$ and $\Phi _{1} \in \ker A,\Phi _{2} \in \ker B$,
        \begin{equation*}
            \left< A^{*} f,\Phi _{1}\right> +\left< B^{*} g,\Phi _{2}\right> =\langle f,A\Phi _{1} \rangle +\langle g,B\Phi _{2} \rangle =0.
        \end{equation*}
        Conversely, let $x\in \mathfrak{H} ,\ y\in \mathfrak{H}^{\perp }$ and we assume
        \begin{equation*}
            \left< A^{*} f,x\right> +\left< B^{*} g,y\right> =\langle f,Ax\rangle +\langle g,By\rangle =0.
        \end{equation*}
        for all $f,g\in \mathfrak{K}$. In this case, $x\in \ker A,\ y\in \ker B$ must hold. Thus the statement is proved.
    \end{proof}
\end{lemma}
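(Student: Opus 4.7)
My plan is to reduce the identification of $\mathcal{L}^\perp$ to the standard Hilbert space identity $(\operatorname{Ran} T^*)^\perp = \ker T$, applied separately to $T=A$ and $T=B$. This way no new analysis of the block relations coming from the unitarity of $U$ is required — the claim is really only about the two column-blocks $A^*$ and $B^*$.

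First, I would split $\mathcal{L}$ across the orthogonal decomposition $\mathfrak{H}\oplus\mathfrak{H}^\perp$. Taking $g=0$ in the definition yields all vectors of the form $[A^*f;0]$, while taking $f=0$ yields all vectors $[0;B^*g]$; since these two families sit in mutually orthogonal summands, the set $\mathcal{L}$ coincides with the algebraic orthogonal direct sum $\operatorname{Ran}(A^*)\oplus\operatorname{Ran}(B^*)$ inside $\mathfrak{H}\oplus\mathfrak{H}^\perp$.

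Next, because orthogonal complements are insensitive to closure and distribute over orthogonal direct sums, I would conclude
\begin{equation*}
\mathcal{L}^\perp = \operatorname{Ran}(A^*)^\perp \oplus \operatorname{Ran}(B^*)^\perp = \ker A \oplus \ker B,
\end{equation*}
which is exactly the claimed description.

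The only step I expect to require genuine care, and hence the main obstacle, is the coordinate-wise splitting of the orthogonality condition: a vector $[x;y]$ lies in $\mathcal{L}^\perp$ iff $\langle f,Ax\rangle+\langle g,By\rangle=0$ for all $f,g\in\mathfrak{K}$, and one must separately extract $Ax=0$ and $By=0$ by specialising to $g=0$ and then to $f=0$. Once that polarisation argument is recorded, the rest is bookkeeping, and I do not foresee any further difficulty.
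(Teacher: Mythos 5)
Your proposal is correct and is essentially the paper's own argument: the paper likewise verifies $\langle A^*f,\Phi_1\rangle+\langle B^*g,\Phi_2\rangle=\langle f,A\Phi_1\rangle+\langle g,B\Phi_2\rangle$ in one direction and extracts $Ax=0$, $By=0$ by letting $f,g$ range freely in the other. Your packaging via $(\operatorname{Ran}T^*)^\perp=\ker T$ applied blockwise is just a named version of the same polarisation step.
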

Next, we define following subspaces induced by the eigenspaces of $AA^{*}$
\begin{align*}
    \mathcal{L}_{0} & :=\operatorname{span}\left\{B^{*} f_{0} \in \mathfrak{H}^{\perp }\mid f_{0} \in \ker A^{*}\right\}, \\
    \mathcal{L}_{1} & :=\operatorname{span}\left\{A^{*} f_{1} \in \mathfrak{H}\mid f_{1} \in \ker B^{*}\right\}.
\end{align*}
For each $\sigma \in ( 0,1)$ that satisfies $AA^{*} f_{\sigma } =\sigma ^{2} f_{\sigma } ,\ \| f_{\sigma } \| =1$, we define the subspace
\begin{equation*}
    \mathcal{L}_{\sigma } :=\left\{\begin{bmatrix}
        \mu _{1}\frac{A^{*}}{\sigma } \\
        \mu _{2}\frac{B^{*}}{\sqrt{1-\sigma ^{2}}}
    \end{bmatrix} f_{\sigma } \in \mathfrak{H} \oplus \mathfrak{H}^{\perp }\middle| \mu _{1} ,\mu _{2} \in \mathbb{C}\right\} .
\end{equation*}
\begin{lemma}
    \begin{align*}
        ( 1) & \ \mathcal{L}_{0} \subset \mathcal{L} ,\ \mathcal{L}_{1} \subset \mathcal{L} ,\ \mathcal{L}_{\sigma } \subset \mathcal{L} \\
        ( 2) & \ \ker A^{*} =\ker AA^{*} ,\ \ker B^{*} =\ker\left( AA^{*} -1\right)
    \end{align*}
    In particular,
    \begin{align*}
        f_{0} \in \operatorname{ker} A^{*} & \Leftrightarrow AA^{*} f_{0} =0     \\
        f_{1} \in \operatorname{ker} B^{*} & \Leftrightarrow AA^{*} f_{1} =f_{1}
    \end{align*}
\end{lemma}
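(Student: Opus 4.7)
The plan is to prove both parts by direct unpacking of the definitions, using only the unitarity relations of $U$ given earlier in the section and the standard Hilbert-space identity $\ker T^{*} = \ker TT^{*}$.

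For part (1), each of $\mathcal{L}_{0}$, $\mathcal{L}_{1}$, and $\mathcal{L}_{\sigma}$ consists of vectors built from $A^{*}$ and $B^{*}$ applied to elements of $\mathfrak{K}$, so the inclusion into $\mathcal{L} = \left\{ \begin{bmatrix} A^{*} f \\ B^{*} g \end{bmatrix} : f, g \in \mathfrak{K} \right\}$ becomes a matter of picking the right $f$ and $g$. For a generator $B^{*} f_{0}$ of $\mathcal{L}_{0}$ with $f_{0} \in \ker A^{*}$, viewed as a column vector in $\mathfrak{H} \oplus \mathfrak{H}^{\perp}$, I would write it as $\begin{bmatrix} 0 \\ B^{*} f_{0} \end{bmatrix} = \begin{bmatrix} A^{*} f_{0} \\ B^{*} f_{0} \end{bmatrix}$ since $A^{*} f_{0} = 0$; an analogous choice (with the roles of $A$ and $B$ swapped) treats $\mathcal{L}_{1}$. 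For $\mathcal{L}_{\sigma}$, the element is already of the desired form with $f = \mu_{1} f_{\sigma}/\sigma$ and $g = \mu_{2} f_{\sigma}/\sqrt{1 - \sigma^{2}}$, both well-defined because $\sigma \in (0,1)$.

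For part (2), I would first record the general fact that $\ker T^{*} = \ker TT^{*}$ for any bounded operator $T$: the inclusion $\subset$ is immediate, and $\supset$ follows from $\|T^{*} x\|^{2} = \langle TT^{*} x, x \rangle$. Applied to $T = A$, this yields $\ker A^{*} = \ker AA^{*}$ directly. Applied to $T = B$, it gives $\ker B^{*} = \ker BB^{*}$, after which I would invoke the unitarity relation $AA^{*} + BB^{*} = I_{\mathfrak{K}}$ to rewrite $BB^{*} = I_{\mathfrak{K}} - AA^{*}$, giving $\ker B^{*} = \ker(AA^{*} - 1)$. The ``in particular'' statements are then immediate rephrasings applied pointwise to $f_{0}$ and $f_{1}$.

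There is no real obstacle here: (1) is a routine definition-chase, and (2) combines the standard kernel identity with one of the unitarity relations of $U$. The only subtlety worth flagging is that the notation ``$f_{1}$'' for an element of $\ker B^{*}$ is in fact \emph{justified} by (2), since $\ker B^{*}$ coincides with the eigenspace of $AA^{*}$ at eigenvalue $1$, i.e.\ with ``$f_{\sigma}$ evaluated at $\sigma = 1$'', and analogously for $f_{0}$ and eigenvalue $0$; this compatibility is what makes the subspace decomposition of Section~\ref{sec:SMT} work uniformly.
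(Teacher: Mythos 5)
Your proof is correct and complete; the paper actually states this lemma without proof, and your argument (rewriting each generator as $\begin{bmatrix} A^{*}f \\ B^{*}g\end{bmatrix}$ for suitable $f,g\in\mathfrak{K}$, plus the standard identity $\ker T^{*}=\ker TT^{*}$ combined with $AA^{*}+BB^{*}=I_{\mathfrak{K}}$) is exactly the routine verification the authors leave implicit. No gaps.
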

\begin{proposition}\label{prop:invariant1}
    \begin{align*}
         & U\mathcal{L} \subset \tilde{\mathcal{L}} :=\left\{\begin{bmatrix}
                                                                 f \\
                                                                 CA^{*} g
                                                             \end{bmatrix} \in \mathfrak{K} \oplus \mathfrak{K}^{\perp }\middle| f,g\in \mathfrak{K}\right\},                                               \\
         & U\mathcal{L}^{\perp } \subset \tilde{\mathcal{L}}^{\perp } :=\left\{C\Phi _{1} +D\Phi _{2} \in \mathfrak{K}^{\perp }\middle| \Phi _{1} \in \ker A,\Phi _{2} \in \ker B\right\},                  \\
         & U\mathcal{L}_{0} \subset \tilde{\mathcal{L}}_{0} :=\operatorname{span}\left\{f_{0}\middle| f_{0} \in \ker A^{*}\right\},                                                                         \\
         & U\mathcal{L}_{1} \subset \tilde{\mathcal{L}}_{1} :=\operatorname{span}\left\{f_{1}\middle| f_{1} \in \ker B^{*}\right\},                                                                         \\
         & U\mathcal{L}_{\sigma } \subset \tilde{\mathcal{L}}_{\sigma } :=\left\{\begin{bmatrix}
                                                                                     \mu _{1} \\
                                                                                     \mu _{2}\frac{CA^{*}}{\sigma \sqrt{1-\sigma ^{2}}}
                                                                                 \end{bmatrix} f_{\sigma } \in \mathfrak{K} \oplus \mathfrak{K}^{\perp }\middle| \mu _{1} ,\mu _{2} \in \mathbb{C}\right\}.
    \end{align*}
    Conversely, $U^{*}\tilde{\mathcal{L}} \subset \mathcal{L} ,\ U^{*}\tilde{\mathcal{L}}_{\sigma } \subset \mathcal{L_{\sigma }} ,\ U^{*}\tilde{\mathcal{L}}^{\perp } \subset \mathcal{L}^{\perp } .$
    \begin{proof}
        Let $f,g\in \mathfrak{K}$, we then have
        \begin{align*}
            U\begin{bmatrix}
                 A^{*} f \\
                 B^{*} g
             \end{bmatrix}     & =\begin{bmatrix}
                                      AA^{*} f+BB^{*} g \\
                                      CA^{*} f+DB^{*} g
                                  \end{bmatrix} =\begin{bmatrix}
                                                     AA^{*} f+BB^{*} g \\
                                                     CA^{*}( f-g)
                                                 \end{bmatrix} \in \tilde{\mathcal{L}} \\
            U^{*}\begin{bmatrix}
                     f \\
                     CA^{*} g
                 \end{bmatrix} & =\begin{bmatrix}
                                      A^{*} f+C^{*} CA^{*} g \\
                                      B^{*} f+D^{*} CA^{*} g
                                  \end{bmatrix} =\begin{bmatrix}
                                                     A^{*}\left( f+BB^{*} g\right) \\
                                                     B^{*}\left( f-AA^{*} g\right)
                                                 \end{bmatrix} \in \mathcal{L}
        \end{align*}
        When $f=0$, $g=f_{0} \in \operatorname{ker} A^{*}$, we have
        \begin{equation*}
            \begin{aligned}
                U\left[\begin{array}{ c }
                               0 \\
                               B^{*} f_{0}
                           \end{array}\right] & =\left[\begin{array}{ c }
                                                           BB^{*} f_{0} \\
                                                           DB^{*} f_{0}
                                                       \end{array}\right] =\left[\begin{array}{ c }
                                                                                     \left( 1-AA^{*}\right) f_{0} \\
                                                                                     -CA^{*} f_{0}
                                                                                 \end{array}\right] =\begin{bmatrix}
                                                                                                     f_{0} \\
                                                                                                     0
                                                                                                 \end{bmatrix} \in\tilde{\mathcal{L}}_{0} \\
                U^{*}\begin{bmatrix}
                         f_{0} \\
                         0
                     \end{bmatrix}      & =\begin{bmatrix}
                                               A^{*} f_{0} \\
                                               B^{*} f_{0}
                                           \end{bmatrix} =\begin{bmatrix}
                                                              0 \\
                                                              B^{*} f_{0}
                                                          \end{bmatrix}\in{\mathcal{L}}_{0}
            \end{aligned} ,
        \end{equation*}
        Moreover, when $f=f_{1} \in \ker B^{*} ,\ g=0$,
        \begin{align*}
            U\begin{bmatrix}
                 A^{*} f_{1} \\
                 0
             \end{bmatrix}     & =\begin{bmatrix}
                                      AA^{*} f_{1} \\
                                      CA^{*} f_{1}
                                  \end{bmatrix} =\begin{bmatrix}
                                                     f_{1} \\
                                                     -DB^{*} f_{1}
                                                 \end{bmatrix} =\begin{bmatrix}
                                                                    f_{1} \\
                                                                    0
                                                                \end{bmatrix} \in\tilde{\mathcal{L}}_{1} \\
            U^{*}\begin{bmatrix}
                     f_{1}  , \\
                     CA^{*} f_{1}
                 \end{bmatrix} & =\begin{bmatrix}
                                      A^{*} f_{1} +C^{*} CA^{*} f_{1} \\
                                      B^{*} f_{1} +D^{*} CA^{*} f_{1}
                                  \end{bmatrix} =\begin{bmatrix}
                                                     A^{*} f_{1} +A^{*}\left( 1-AA^{*}\right) f_{1} \\
                                                     B^{*}\left( 1-AA^{*}\right) f_{1}
                                                 \end{bmatrix} =\begin{bmatrix}
                                                                    A^{*} f_{1} \\
                                                                    0
                                                                \end{bmatrix}
            \in\mathcal{L}_{1}
        \end{align*}
        For each $\sigma \in ( 0,1)$ that satisfies $AA^{*} f_{\sigma } =\sigma ^{2} f_{\sigma }$, we define the subspace
        \begin{align*}
            U\begin{bmatrix}
                 \mu _{1}\frac{A^{*}}{\sigma } \\
                 \mu _{2}\frac{B^{*}}{\sqrt{1-\sigma ^{2}}}
             \end{bmatrix} f_{\sigma }        & =\begin{bmatrix}
                                                     \mu _{1}\frac{AA^{*}}{\sigma } +\mu _{2}\frac{1-AA^{*}}{\sqrt{1-\sigma ^{2}}} \\
                                                     \mu _{1}\frac{CA^{*}}{\sigma } -\mu _{2}\frac{CA^{*}}{\sqrt{1-\sigma ^{2}}}
                                                 \end{bmatrix} f_{\sigma } =\begin{bmatrix}
                                                                                \nu _{1} \\
                                                                                \frac{CA^{*}}{\sigma \sqrt{1-\sigma ^{2}}} \nu _{2}
                                                                            \end{bmatrix} f_{\sigma }\in \tilde{\mathcal{L}}_{\sigma }, \\
            U^{*}\begin{bmatrix}
                     \mu _{1} \\
                     \mu _{2}\frac{CA^{*}}{\sigma \sqrt{1-\sigma ^{2}}}
                 \end{bmatrix} f_{\sigma } & =\begin{bmatrix}
                                                  A^{*} \nu _{1} +\frac{\left( 1-A^{*} A\right) A^{*}}{\sigma \sqrt{1-\sigma ^{2}}} \nu _{2} \\
                                                  B^{*} \nu _{1} +\frac{B^{*} AA^{*}}{\sigma \sqrt{1-\sigma ^{2}}} \nu _{2}
                                              \end{bmatrix} f_{\sigma } =\begin{bmatrix}
                                                                             \nu _{1}\frac{A^{*}}{\sigma } \\
                                                                             \nu _{2}\frac{B^{*}}{\sqrt{1-\sigma ^{2}}}
                                                                         \end{bmatrix} f_{\sigma }\in \mathcal{L}_{\sigma }.
        \end{align*}
        where for both cases, the following holds:
        \begin{equation*}
            \begin{bmatrix}
                \nu _{1} \\
                \nu _{2}
            \end{bmatrix} =\begin{bmatrix}
                \sigma               & \sqrt{1-\sigma ^{2}} \\
                \sqrt{1-\sigma ^{2}} & -\sigma
            \end{bmatrix}\begin{bmatrix}
                \mu _{1} \\
                \mu _{2}
            \end{bmatrix} .
        \end{equation*}
    \end{proof}
\end{proposition}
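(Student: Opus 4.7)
The plan is to verify each of the six inclusions by a direct computation: substitute a generic element of the source subspace into $U$ (or $U^*$) and simplify using the four unitarity identities together with the two characterizations $\ker A^* = \ker AA^*$ and $\ker B^* = \ker(AA^*-1)$ from the preceding lemma. The essential algebraic moves are $BB^* = I_{\mathfrak{K}} - AA^*$, $DB^* = -CA^*$, and their adjoints; no spectral theory beyond these is needed.

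I would handle the two ambient subspaces first. For $[A^*f, B^*g]^T \in \mathcal{L}$, applying $U$ yields upper component $AA^*f + BB^*g \in \mathfrak{K}$, which is unconstrained, and lower component $CA^*f + DB^*g = CA^*(f-g)$ by the cross relation, exactly the shape required by $\tilde{\mathcal{L}}$. For $\mathcal{L}^{\perp}$, the earlier lemma identifies its elements as $[\Phi_1, \Phi_2]^T$ with $\Phi_1 \in \ker A$ and $\Phi_2 \in \ker B$, so the upper component $A\Phi_1 + B\Phi_2$ of $U[\Phi_1,\Phi_2]^T$ vanishes and the image lies in $\mathfrak{K}^{\perp}$.

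The subspaces $\mathcal{L}_0$ and $\mathcal{L}_1$ are specializations of the $\mathcal{L}$ computation with $f=0,\, g=f_0$ and $f=f_1,\, g=0$ respectively. Since $f_0 \in \ker A^*$ forces $AA^*f_0 = 0$ (hence $BB^*f_0 = f_0$ and $CA^*f_0 = 0$), and $f_1 \in \ker B^*$ forces $AA^*f_1 = f_1$ (hence $B^*f_1 = 0$), both components collapse to the claimed form. The substantive case is $\mathcal{L}_\sigma$ for $\sigma \in (0,1)$: applying $U$ to the generic vector, the upper component becomes $(\mu_1 \sigma + \mu_2 \sqrt{1-\sigma^2})f_\sigma$ using $AA^*f_\sigma = \sigma^2 f_\sigma$ and $BB^*f_\sigma = (1-\sigma^2)f_\sigma$, while the lower component, after $DB^* = -CA^*$, is a scalar multiple of $CA^*f_\sigma$. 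Reading off the scalars, the pair $(\nu_1, \nu_2)$ is obtained from $(\mu_1, \mu_2)$ by a $2\times 2$ orthogonal involution.

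The converse inclusions follow by the same template with $U^*$ in place of $U$, using $C^*C = I_{\mathfrak{H}} - A^*A$ and $D^*C = -B^*A$; the involution property of the transition matrix means the same matrix governs $U^*\tilde{\mathcal{L}}_\sigma \subset \mathcal{L}_\sigma$. The main obstacle I foresee is purely bookkeeping in the $\mathcal{L}_\sigma$ case: the normalizations $1/\sigma$, $1/\sqrt{1-\sigma^2}$, and $1/(\sigma\sqrt{1-\sigma^2})$ appearing in the definitions are calibrated precisely so that the transition matrix is orthogonal, and in the converse direction one must carefully intertwine $C^*CA^* = (1-A^*A)A^*$ and $D^*CA^* = -B^*AA^* \cdot f_\sigma$ (evaluated on $f_\sigma$, this produces the $\sigma^2$ factor that converts $CA^*/(\sigma\sqrt{1-\sigma^2})$ back into $B^*/\sqrt{1-\sigma^2}$).
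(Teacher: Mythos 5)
Your proposal is correct and follows essentially the same route as the paper: a direct entrywise computation of $U$ and $U^{*}$ on generic elements of each subspace, simplified via the unitarity relations $BB^{*}=I_{\mathfrak{K}}-AA^{*}$, $DB^{*}=-CA^{*}$, $C^{*}C=I_{\mathfrak{H}}-A^{*}A$, $D^{*}C=-B^{*}A$, with the same orthogonal involution $\begin{bmatrix}\sigma & \sqrt{1-\sigma^{2}}\\ \sqrt{1-\sigma^{2}} & -\sigma\end{bmatrix}$ governing both directions in the $\mathcal{L}_{\sigma}$ case. The only difference is that you spell out the $U\mathcal{L}^{\perp}\subset\tilde{\mathcal{L}}^{\perp}$ computation (upper component $A\Phi_{1}+B\Phi_{2}=0$), which the paper's written proof leaves implicit.
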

The above proposition also implies the following statement
\begin{proposition}
    $\mathcal{L} ,\mathcal{L}_{0} ,\mathcal{L}_{1} ,\mathcal{L_{\sigma } ,L}^{\perp }$ are invariant subspaces of $W_{k}$.
\end{proposition}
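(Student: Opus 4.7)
The plan is to decompose $W_k = R_{\mathfrak{H}}(\theta_k) U^{*} R_{\mathfrak{K}}(\phi_k) U$ into its four factors and track how each of the candidate subspaces is transported by each factor. Proposition \ref{prop:invariant1} already provides the two nontrivial transport statements, namely $U\mathcal{M}\subset\tilde{\mathcal{M}}$ and $U^{*}\tilde{\mathcal{M}}\subset\mathcal{M}$ for $\mathcal{M}\in\{\mathcal{L},\mathcal{L}_{0},\mathcal{L}_{1},\mathcal{L}_{\sigma},\mathcal{L}^{\perp}\}$. Therefore it suffices to verify the two easy assertions that (i) $R_{\mathfrak{K}}(\phi_k)$ preserves each $\tilde{\mathcal{M}}$, and (ii) $R_{\mathfrak{H}}(\theta_k)$ preserves each $\mathcal{M}$. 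Chaining these four inclusions then yields $W_k\mathcal{M}\subset\mathcal{M}$.

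For (ii), observe that $R_{\mathfrak{H}}(\theta_k)$ acts as the scalar $e^{i\theta_k}$ on $\mathfrak{H}$ and as $e^{-i\theta_k}$ on $\mathfrak{H}^{\perp}$. Since $\mathcal{L}$, $\mathcal{L}_0$, $\mathcal{L}_1$, $\mathcal{L}_{\sigma}$, $\mathcal{L}^{\perp}$ are all defined coordinate-wise with respect to the decomposition $\mathfrak{H}\oplus\mathfrak{H}^{\perp}$, the rotation simply rescales the first component by $e^{i\theta_k}$ and the second by $e^{-i\theta_k}$; for instance a generic element $[A^{*}f;B^{*}g]^{T}\in\mathcal{L}$ is sent to $[A^{*}(e^{i\theta_k}f);B^{*}(e^{-i\theta_k}g)]^{T}\in\mathcal{L}$, and for $\mathcal{L}_{\sigma}$ the parameters $(\mu_1,\mu_2)$ are merely rescaled to $(e^{i\theta_k}\mu_1,e^{-i\theta_k}\mu_2)$. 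The argument for (i) is entirely analogous: $R_{\mathfrak{K}}(\phi_k)$ acts diagonally with respect to $\mathfrak{K}\oplus\mathfrak{K}^{\perp}$, and each $\tilde{\mathcal{M}}$ is defined coordinate-wise in this decomposition, so the rotation only rescales the defining parameters.

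Combining (i), (ii) with Proposition \ref{prop:invariant1} gives the chain
\begin{equation*}
\mathcal{M}\xrightarrow{U}\tilde{\mathcal{M}}\xrightarrow{R_{\mathfrak{K}}(\phi_k)}\tilde{\mathcal{M}}\xrightarrow{U^{*}}\mathcal{M}\xrightarrow{R_{\mathfrak{H}}(\theta_k)}\mathcal{M},
\end{equation*}
proving $W_k\mathcal{M}\subset\mathcal{M}$ for every subspace in the list. No step is really an obstacle: once Proposition \ref{prop:invariant1} is in hand, the only thing to notice is that the block-diagonal structure of $R_{\mathfrak{H}}(\theta_k)$ and $R_{\mathfrak{K}}(\phi_k)$ is compatible with the coordinate-wise definitions of the subspaces. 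The mildest bookkeeping point is to make sure that in $\mathcal{L}^{\perp}$ and $\tilde{\mathcal{L}}^{\perp}$ the kernel conditions $\Phi_1\in\ker A$, $\Phi_2\in\ker B$ are stable under scalar multiplication, which is immediate since $\ker A$ and $\ker B$ are linear subspaces.
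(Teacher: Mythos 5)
Your argument is correct and is exactly the route the paper intends: the text simply remarks that Proposition \ref{prop:invariant1} ``implies'' the invariance, and your chain $\mathcal{M}\xrightarrow{U}\tilde{\mathcal{M}}\xrightarrow{R_{\mathfrak{K}}(\phi_k)}\tilde{\mathcal{M}}\xrightarrow{U^{*}}\mathcal{M}\xrightarrow{R_{\mathfrak{H}}(\theta_k)}\mathcal{M}$ supplies the details the paper leaves implicit, including the easy observation that the block-diagonal rotations only rescale the defining parameters of each subspace. The only cosmetic point is that the ``Conversely'' clause of Proposition \ref{prop:invariant1} explicitly lists $U^{*}\tilde{\mathcal{L}}$, $U^{*}\tilde{\mathcal{L}}_{\sigma}$, $U^{*}\tilde{\mathcal{L}}^{\perp}$ but the cases $\tilde{\mathcal{L}}_{0}$, $\tilde{\mathcal{L}}_{1}$ are established inside its proof, so your appeal to it for all five subspaces is justified.
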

\section{Relation to Qubitization and QSP\label{sec:qsp}}
Next, we consider how the vector in each subspaces evolves under the even iteration.
\begin{proposition}
    Let $f,g\in \mathfrak{K}$, we have,
    \begin{align*}
        \prod _{k=1}^{n} W_{k}\begin{bmatrix}
                                  A^{*} f \\
                                  B^{*} g
                              \end{bmatrix} & =\begin{bmatrix}
                                                   A^{*}\left( \Pi _{n}\left( AA^{*}\right) f+i\Phi _{n}\left( AA^{*}\right) BB^{*} g\right) \\
                                                   B^{*}\left( i\Phi _{n}^{*}\left( AA^{*}\right) AA^{*} f+\Pi _{n}^{*}\left( AA^{*}\right) g\right)
                                               \end{bmatrix}.
    \end{align*}
    and for $\Phi _{1} \in \ker A,\Phi _{2} \in \ker B$, we can check that
    \begin{equation*}
        \prod _{k=1}^{n} W_{k}\begin{bmatrix}
            \Phi _{1} \\
            \Phi _{2}
        \end{bmatrix} =\begin{bmatrix}
            \Pi _{n}( 0) \Phi _{1} \\
            \Pi _{n}^{*}( 1) \Phi _{2}
        \end{bmatrix} =\prod _{k=1}^{n}\begin{bmatrix}
            e^{i( \theta _{k} -\phi _{k})} \Phi _{1} \\
            e^{-i( \theta _{k} +\phi _{k})} \Phi _{2}
        \end{bmatrix}.
    \end{equation*}
    Moreover, for $f_{1} \in \ker B^{*}$,
    \begin{align*}
        \prod _{k=1}^{n} W_{k}\begin{bmatrix}
                                  A^{*} f_{1} \\
                                  0
                              \end{bmatrix} & =\prod _{k=1}^{n} e^{i( \theta _{k} +\phi _{k})}\begin{bmatrix}
                                                                                                  A^{*} f_{1} \\
                                                                                                  0
                                                                                              \end{bmatrix},
    \end{align*}
    and for $f_{0} \in \operatorname{ker} A^{*}$, we have
    \begin{equation*}
        \begin{aligned}
            \prod _{k=1}^{n} W_{k}\begin{bmatrix}
                                      0 \\
                                      B^{*} f_{0}
                                  \end{bmatrix} & =\prod _{k=1}^{n} e^{i( \phi _{k} -\theta _{k})}\begin{bmatrix}
                                                                                                      0 \\
                                                                                                      B^{*} f_{0}
                                                                                                  \end{bmatrix}.
        \end{aligned}
    \end{equation*}
    Lastly, for $AA^{*} =\sigma ^{2} f_{\sigma }$ for $\sigma \in ( 0,1)$, we have
    \begin{equation*}
        \prod _{k=1}^{n} W_{k}\begin{bmatrix}
            \mu _{1}\frac{A^{*}}{\sigma } \\
            \mu _{2}\frac{B^{*}}{\sqrt{1-\sigma ^{2}}}
        \end{bmatrix} f_{\sigma } =\begin{bmatrix}
            \nu _{1}\frac{A^{*}}{\sigma } \\
            \nu _{2}\frac{B^{*}}{\sqrt{1-\sigma ^{2}}}
        \end{bmatrix} f_{\sigma }
    \end{equation*}
    where
    \begin{equation*}
        \begin{bmatrix}
            \nu _{1} \\
            \nu _{2}
        \end{bmatrix} =\begin{bmatrix}
            \Pi _{n}\left( \sigma ^{2}\right)                                   & i\Phi _{n}\left( \sigma ^{2}\right) \sigma \sqrt{1-\sigma ^{2}} \\
            i\Phi _{n}^{*}\left( \sigma ^{2}\right) \sigma \sqrt{1-\sigma ^{2}} & \Pi _{n}^{*}\left( \sigma ^{2}\right)
        \end{bmatrix}\begin{bmatrix}
            \mu _{1} \\
            \mu _{2}
        \end{bmatrix} .
    \end{equation*}
\end{proposition}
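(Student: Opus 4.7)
The plan is to reduce every claim to the block-matrix formula for $\prod_{k=1}^n W_k$ established in the earlier proposition, namely
$\prod_{k=1}^n W_k = \begin{bmatrix} \Pi_n(A^*A) & i\Phi_n(A^*A)A^*B \\ i\Phi_n^*(D^*D)B^*A & \Pi_n^*(D^*D) \end{bmatrix}$, and then push the polynomials through $A^*$ and $B^*$ using the intertwining relations. The key identity I would repeatedly invoke is the polynomial version of $A^*A\cdot A^* = A^*\cdot AA^*$, giving $p(A^*A)A^* = A^*p(AA^*)$ for any $p\in\mathbb{C}[X]$. Combined with the unitarity relation $D^*D = 1-B^*B$ and its dual $1-BB^* = AA^*$, the same trick yields $p(D^*D)B^* = B^*p(AA^*)$. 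With these two identities, every polynomial in $A^*A$ or $D^*D$ appearing in the matrix can be replaced by a single polynomial in $AA^*$ acting on the $\mathfrak{K}$-valued arguments $f$, $g$.

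For the first (generic) block, I would apply the matrix to $[A^*f,B^*g]^T$ and, after moving polynomials across, the top row becomes $A^*(\Pi_n(AA^*)f + i\Phi_n(AA^*)BB^*g)$ directly, while the bottom row becomes $B^*(i\Phi_n^*(AA^*)AA^*f + \Pi_n^*(AA^*)g)$. This is exactly the stated formula. The kernel cases are then obtained by specializing: if $\Phi_1\in\ker A$ then $AA^*\Phi_1=0$, so $\Pi_n(A^*A)\Phi_1 = \Pi_n(0)\Phi_1$ and the off-diagonal term vanishes because it carries a factor of $A^*B\Phi_2$ which is killed by $B\Phi_2=0$; and symmetrically for $\Phi_2\in\ker B$, where $D^*D\Phi_2=\Phi_2$ gives the factor $\Pi_n^*(1)$. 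The explicit phase factors $\prod_k e^{i(\theta_k-\phi_k)}$ and $\prod_k e^{-i(\theta_k+\phi_k)}$ then come directly from the preceding lemma evaluating $\Pi_n(0)$ and $\Pi_n(1)$.

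The one-dimensional invariant subspaces $\mathcal{L}_0$ and $\mathcal{L}_1$ are special cases of the first formula: taking $(f,g)=(f_1,0)$ with $f_1\in\ker B^*$ forces $AA^*f_1=(1-BB^*)f_1=f_1$, so every polynomial in $AA^*$ acts as its value at $1$, making the top entry collapse to $\prod_k e^{i(\theta_k+\phi_k)}A^*f_1$ and the bottom entry to $B^*(\,\cdot\,)f_1=0$. Similarly $(f,g)=(0,f_0)$ with $f_0\in\ker A^*$ kills the top entry (via the outer $A^*$) and reduces the bottom to $\Pi_n^*(0)B^*f_0$, yielding the phase $\prod_k e^{i(\phi_k-\theta_k)}$.

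For the two-dimensional $\mathcal{L}_\sigma$ block, the same first-block formula, specialized to $f=\mu_1 f_\sigma/\sigma$, $g=\mu_2 f_\sigma/\sqrt{1-\sigma^2}$ and using $AA^*f_\sigma=\sigma^2 f_\sigma$, $BB^*f_\sigma=(1-\sigma^2)f_\sigma$, lets each polynomial in $AA^*$ be replaced by its scalar value at $\sigma^2$. Rearranging the resulting scalar factors into a $2\times 2$ matrix acting on $(\mu_1,\mu_2)^T$ reproduces the stated transition matrix. The main obstacle is purely bookkeeping: one has to be careful to move the right polynomial through the right operator (distinguishing the $A^*A/AA^*$ and $D^*D/BB^*$ cases) and to match the $\sigma$ and $\sqrt{1-\sigma^2}$ normalizations so that $\nu_1$ and $\nu_2$ combine into the claimed matrix without spurious factors.
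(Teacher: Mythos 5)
Your proposal is correct and follows exactly the route the paper implicitly relies on (the proposition is stated without proof there): apply the block-matrix form of $\prod_{k=1}^n W_k$, intertwine via $p(A^*A)A^*=A^*p(AA^*)$ and $p(D^*D)B^*=B^*p(1-BB^*)=B^*p(AA^*)$, then specialize using $AA^*f_0=0$, $AA^*f_1=f_1$, and $AA^*f_\sigma=\sigma^2 f_\sigma$. All the scalar bookkeeping in the $\mathcal{L}_\sigma$ case checks out against the stated transition matrix.
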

For the odd iteration, the above subspaces become not invariant, however we still can consider how the basis of each subspaces evolves under the odd iteration.
\begin{proposition}
    Let $f,g\in \mathfrak{K}$, we then have
    \begin{equation*}
        R_{\mathfrak{K}}( \phi _{n+1}) U\prod _{k=1}^{n} W_{k}\begin{bmatrix}
            A^{*} f \\
            B^{*} g
        \end{bmatrix} =\begin{bmatrix}
            AA^{*} \Theta _{n}\left( AA^{*}\right) f+\Omega _{n}\left( AA^{*}\right)\left( BB^{*}\right) g \\
            CA^{*}\left( \Omega _{n}^{*}\left( AA^{*}\right) f-\Theta _{n}^{*}\left( BB^{*}\right) g\right)
        \end{bmatrix} ,
    \end{equation*}
    and for $\Phi _{1} \in \ker A,\Phi _{2} \in \ker B$,
    \begin{align*}
        R_{\mathfrak{K}}( \phi _{n+1}) U\prod _{k=1}^{n} W_{k}\begin{bmatrix}
                                                                  \Phi _{1} \\
                                                                  \Phi _{2}
                                                              \end{bmatrix} & =\begin{bmatrix}
                                                                                   0 \\
                                                                                   \prod _{k=1}^{n+1} e^{-i\phi _{k}}\left(\prod _{k=1}^{n} e^{i\theta _{k}} C\Phi _{1} +\prod _{k=1}^{n} e^{-i\theta _{k}} D\Phi _{2}\right)
                                                                               \end{bmatrix} .
    \end{align*}
    Moreover, for $f_{0} \in \operatorname{ker} A^{*}$ and $f_{1} \in \ker B^{*}$, we have
    \begin{gather*}
        R_{\mathfrak{K}}( \phi _{n+1}) U\prod _{k=1}^{n} W_{k}\left[\begin{array}{ c }
                0 \\
                B^{*} f_{0}
            \end{array}\right] =\begin{bmatrix}
            \Omega _{n}( 0) \\
            -\Theta _{n}^{*}( 0) CA^{*}
        \end{bmatrix} f_{0} ,\\
        R_{\mathfrak{K}}( \phi _{n+1}) U\prod _{k=1}^{n} W_{k}\left[\begin{array}{ c }
                A^{*} f_{1} \\
                0
            \end{array}\right] =\begin{bmatrix}
            \Theta _{n}( 1) \\
            \Omega _{n}^{*}( 1) CA^{*}
        \end{bmatrix} f_{1} .
    \end{gather*}
    Lastly, for $AA^{*} =\sigma ^{2} f_{\sigma } ,$ $\sigma \in ( 0,1)$, we have
    \begin{equation*}
        R_{\mathfrak{K}}( \phi _{n+1}) U\prod _{k=1}^{n} W_{k}\begin{bmatrix}
            \mu _{1}\frac{A^{*}}{\sigma } \\
            \mu _{2}\frac{B^{*}}{\sqrt{1-\sigma ^{2}}}
        \end{bmatrix} f_{\sigma } =\begin{bmatrix}
            \nu _{1} \\
            \nu _{2}\frac{CA^{*}}{\sigma \sqrt{1-\sigma ^{2}}}
        \end{bmatrix} f_{\sigma }
    \end{equation*}
    where
    \begin{equation*}
        \begin{bmatrix}
            \nu _{1} \\
            \nu _{2}
        \end{bmatrix} =\begin{bmatrix}
            \Theta _{n}\left( \sigma ^{2}\right) \sigma                  & \Omega _{n}\left( \sigma ^{2}\right)\sqrt{1-\sigma ^{2}} \\
            \Omega _{n}^{*}\left( \sigma ^{2}\right)\sqrt{1-\sigma ^{2}} & -\Theta _{n}^{*}\left( \sigma ^{2}\right) \sigma
        \end{bmatrix}\begin{bmatrix}
            \mu _{1} \\
            \mu _{2}
        \end{bmatrix} .
    \end{equation*}
\end{proposition}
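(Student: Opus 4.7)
My plan is to derive all five identities from the block-matrix form of $R_{\mathfrak{K}}(\phi_{n+1})U\prod_{k=1}^n W_k$ established in the previous proposition, taking the generic $[A^*f, B^*g]^T$ identity as a master formula from which the four remaining cases fall out by specialization. The only new algebraic input needed beyond Section~\ref{sec:poly} is the intertwining relation $p(DD^*)\,CA^* = CA^*\,p(AA^*)$ for every polynomial $p$, which I would obtain by induction from the base case $DD^*\,CA^* = CA^*\,AA^*$: writing $DD^* = I - CC^*$ and using $C^*CA^* = (I - A^*A)A^* = A^*(I - AA^*)$ gives $CC^*CA^* = CA^*(I - AA^*)$, hence $DD^*CA^* = CA^* - CA^*(I - AA^*) = CA^*\,AA^*$.

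For the master formula I would multiply $\begin{bmatrix}\Theta_n(AA^*)A & \Omega_n(AA^*)B \\ \Omega_n^*(DD^*)C & \Theta_n^*(DD^*)D\end{bmatrix}$ against $[A^*f, B^*g]^T$. The top block reads $AA^*\Theta_n(AA^*)f + \Omega_n(AA^*)BB^*g$ directly. The bottom block reads $\Omega_n^*(DD^*)\,CA^*f + \Theta_n^*(DD^*)\,DB^*g$; applying $DB^* = -CA^*$ (from $CA^* + DB^* = 0$) and then pulling $CA^*$ through both polynomials via the intertwining relation yields $CA^*\bigl(\Omega_n^*(AA^*)f - \Theta_n^*(AA^*)g\bigr)$, producing the claimed form.

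The remaining cases then specialize the master formula. For $\Phi_1 \in \ker A,\ \Phi_2 \in \ker B$, the preceding proposition in this section gives the scalar action of $\prod_k W_k$; since $A\Phi_1 = B\Phi_2 = 0$, the top block after $U$ vanishes and $R_{\mathfrak{K}}(\phi_{n+1})$ contributes the prefactor $e^{-i\phi_{n+1}}$ multiplying $\alpha C\Phi_1 + \beta D\Phi_2$ with the appropriate phases $\alpha,\beta$. For $f = 0,\ g = f_0 \in \ker A^*$ one uses $AA^*f_0 = 0,\ BB^*f_0 = f_0$ so polynomials in $AA^*$ evaluate at $0$; for $g = 0,\ f = f_1 \in \ker B^*$ one uses $AA^*f_1 = f_1$ so polynomials evaluate at $1$. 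For the $\sigma$-eigenspace case, substituting $f = \mu_1 f_\sigma/\sigma,\ g = \mu_2 f_\sigma/\sqrt{1-\sigma^2}$ and applying $AA^*f_\sigma = \sigma^2 f_\sigma$ reduces all polynomials to scalars; collecting coefficients of $\mu_1,\mu_2$ reproduces the claimed $2\times 2$ matrix. The main bookkeeping hazard is consistent application of the intertwining relation and careful distinction between $AA^*$ and $BB^* = I - AA^*$ in intermediate steps, but no operator identity beyond those already employed in Section~\ref{sec:poly} is needed.
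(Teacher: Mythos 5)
Your argument is correct, and it is the route the paper implicitly intends: the proposition is stated without proof, and everything does follow by applying the block form of $R_{\mathfrak{K}}(\phi_{n+1})U\prod_{k=1}^{n}W_{k}$ from Section~\ref{sec:poly} together with the intertwining relation $p(DD^{*})\,CA^{*}=CA^{*}\,p(AA^{*})$, which you derive correctly from $DD^{*}CA^{*}=CA^{*}AA^{*}$.

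Two remarks. First, your master formula actually yields a bottom block $CA^{*}\bigl(\Omega_{n}^{*}(AA^{*})f-\Theta_{n}^{*}(AA^{*})g\bigr)$, not the printed $CA^{*}\bigl(\Omega_{n}^{*}(AA^{*})f-\Theta_{n}^{*}(BB^{*})g\bigr)$; since $BB^{*}=I_{\mathfrak{K}}-AA^{*}$, these are genuinely different operators, so you should not assert that you have reproduced ``the claimed form'' verbatim. Your version is the one consistent with the rest of the proposition: in the $\mathcal{L}_{\sigma}$ case the stated $2\times2$ matrix has entry $-\Theta_{n}^{*}(\sigma^{2})\sigma$, which requires evaluating $\Theta_{n}^{*}$ at $\sigma^{2}$ (the eigenvalue of $AA^{*}$), not at $1-\sigma^{2}$. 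So the statement contains a typo and your derivation silently corrects it; this should be said explicitly. Second, the $\Phi_{1}\in\ker A,\ \Phi_{2}\in\ker B$ case is not a specialization of the master formula, since that vector lies in $\mathcal{L}^{\perp}$ rather than $\mathcal{L}$; the separate computation you describe for it (scalar action of $\prod_{k}W_{k}$ from the preceding proposition, then $U$, then the phase $e^{-i\phi_{n+1}}$ on the $\mathfrak{K}^{\perp}$ component) is correct, but your framing of all four remaining cases as specializations is slightly misleading.
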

As in the proof of Proposition \ref{prop:invariant1}, in subspace $\mathcal{L}_{\sigma }$, $U$ will change the basis and its evolution is given by
\begin{equation*}
    U_{\sigma } :=\begin{bmatrix}
        \sigma               & \sqrt{1-\sigma ^{2}} \\
        \sqrt{1-\sigma ^{2}} & -\sigma
    \end{bmatrix} .
\end{equation*}
Also, $R_{\mathfrak{K}}( \theta )$ and $R_{\mathfrak{H}}( \theta )$ act as Pauli $Z$-rotation matrix $e^{i\theta Z}$ and $e^{i\phi Z}$ on subspace $\mathcal{L}_{\sigma }$, where $Z=\operatorname{diag}( 1,-1)$. Therefore, we immediately get the relationship between polynomials and QSP.
\begin{proposition}
    For each $\sigma \in [ 0,1]$ that satisfies $AA^{*} f_{\sigma } =\sigma ^{2} f_{\sigma } ,$
    \begin{equation*}
        \prod _{k=1}^{n} e^{i\theta _{k} Z} U_{\sigma } e^{i\phi _{k} Z} U_{\sigma } =\begin{bmatrix}
            \Pi _{n}\left( \sigma ^{2}\right)                                   & i\Phi _{n}\left( \sigma ^{2}\right) \sigma \sqrt{1-\sigma ^{2}} \\
            i\Phi _{n}^{*}\left( \sigma ^{2}\right) \sigma \sqrt{1-\sigma ^{2}} & \Pi _{n}^{*}\left( \sigma ^{2}\right)
        \end{bmatrix} ,
    \end{equation*}
    $and$
    \begin{equation*}
        e^{i\phi _{k+1} Z} U_{\sigma }\prod _{k=1}^{n} e^{i\theta _{k} Z} U_{\sigma } e^{i\phi _{k} Z} U_{\sigma } =\begin{bmatrix}
            \Theta _{n}\left( \sigma ^{2}\right) \sigma                  & \Omega _{n}\left( \sigma ^{2}\right)\sqrt{1-\sigma ^{2}} \\
            \Omega _{n}^{*}\left( \sigma ^{2}\right)\sqrt{1-\sigma ^{2}} & -\Theta _{n}^{*}\left( \sigma ^{2}\right) \sigma
        \end{bmatrix} .
    \end{equation*}
\end{proposition}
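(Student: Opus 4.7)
The plan is to recognize the claim as the matrix representation, in the ordered bases of the two-dimensional subspaces $\mathcal{L}_\sigma$ and $\tilde{\mathcal{L}}_\sigma$ introduced in Section~\ref{sec:def}, of operator identities that have already been established. All the ingredients are in place; what remains is careful bookkeeping.

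First I would fix the ordered basis of $\mathcal{L}_\sigma$ given by $(A^*/\sigma) f_\sigma$ and $(B^*/\sqrt{1-\sigma^2}) f_\sigma$, and the corresponding basis of $\tilde{\mathcal{L}}_\sigma$ given by $f_\sigma$ and $(CA^*/(\sigma\sqrt{1-\sigma^2})) f_\sigma$. From the computation in the proof of Proposition~\ref{prop:invariant1}, the restriction $U\vert_{\mathcal{L}_\sigma}\colon \mathcal{L}_\sigma \to \tilde{\mathcal{L}}_\sigma$ is represented in these bases by $U_\sigma$, and the same matrix (since $U_\sigma^2 = I$) represents $U^*\vert_{\tilde{\mathcal{L}}_\sigma}$. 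The rotations $R_\mathfrak{H}(\theta_k)\vert_{\mathcal{L}_\sigma}$ and $R_\mathfrak{K}(\phi_k)\vert_{\tilde{\mathcal{L}}_\sigma}$ are diagonal in the chosen bases with entries $e^{\pm i\theta_k}$ and $e^{\pm i\phi_k}$, since the first basis vector of each lies in $\mathfrak{H}$ (resp.\ $\mathfrak{K}$) and the second in the orthogonal complement; hence they are represented by $e^{i\theta_k Z}$ and $e^{i\phi_k Z}$ respectively.

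Combining these identifications, the factorization $W_k = R_\mathfrak{H}(\theta_k) U^* R_\mathfrak{K}(\phi_k) U$, restricted to the invariant subspace $\mathcal{L}_\sigma$, is represented by $e^{i\theta_k Z} U_\sigma e^{i\phi_k Z} U_\sigma$. Taking the product for $k = 1,\dots,n$ yields the left-hand side of the first identity, while the explicit action of $\prod_{k=1}^n W_k$ on $\mathcal{L}_\sigma$ recorded in the preceding proposition of this section yields exactly the $2\times 2$ matrix on the right-hand side. For the odd iteration, prepending $R_\mathfrak{K}(\phi_{n+1}) U$ contributes an additional factor $e^{i\phi_{n+1} Z} U_\sigma$ and carries the result into $\tilde{\mathcal{L}}_\sigma$; matching with the analogous formula from the preceding proposition closes the argument.

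The main obstacle is essentially bookkeeping: one must track that each application of $U$ switches the ambient basis from $\{(A^*/\sigma)f_\sigma,\ (B^*/\sqrt{1-\sigma^2})f_\sigma\}$ to $\{f_\sigma,\ (CA^*/(\sigma\sqrt{1-\sigma^2}))f_\sigma\}$, so that successive $U$'s and $U^*$'s inside $W_k$ correctly compose into an endomorphism of $\mathcal{L}_\sigma$, while in the odd case the final $U$ shifts the target to $\tilde{\mathcal{L}}_\sigma$ so that the right-hand matrix must be interpreted there. Care with the sign pattern of $U_\sigma$ is what reproduces the minus sign in the $(2,2)$ entry of the odd-case matrix; aside from this sign tracking, no new ideas beyond what the previous propositions of the section provide are required.
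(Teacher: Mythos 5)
Your proposal is correct and follows essentially the same route as the paper: the paper likewise reads off the proposition from the basis change $U_\sigma$ established in the proof of Proposition~\ref{prop:invariant1}, the observation that $R_{\mathfrak{H}}$ and $R_{\mathfrak{K}}$ act as $Z$-rotations on $\mathcal{L}_\sigma$ and $\tilde{\mathcal{L}}_\sigma$, and the explicit $(\mu_1,\mu_2)\mapsto(\nu_1,\nu_2)$ matrices computed in the two preceding propositions of Section~\ref{sec:qsp}. Your bookkeeping of the alternating bases and of the sign in the $(2,2)$ entry for the odd case matches the paper's (largely implicit) argument.
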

Lastly, we show the concrete expression for the polynomials for homogeneous QSP, which means the operator $W_{k}$ is the same for any $k$.
\begin{proposition}
    We consider the homogeneous QSP with two parameters $\theta ,\phi $, where
    \begin{equation*}
        \left( e^{i\theta Z} U_{\sigma } e^{i\phi Z} U_{\sigma }\right)^{n} =\begin{bmatrix}
            \Pi _{n}\left( \sigma ^{2}\right)                                   & i\Phi _{n}\left( \sigma ^{2}\right) \sigma \sqrt{1-\sigma ^{2}} \\
            i\Phi _{n}^{*}\left( \sigma ^{2}\right) \sigma \sqrt{1-\sigma ^{2}} & \Pi _{n}^{*}\left( \sigma ^{2}\right)
        \end{bmatrix} .
    \end{equation*}
    We also put $\sigma =\cos k,\ k\in \left[ 0,\frac{\pi }{2}\right]$, then we have
    \begin{align*}
         & \Pi _{n}\left( \sigma ^{2}\right) =T_{n}( \gamma ) +i\zeta U_{n-1} (\gamma ),            \\
         & \Phi _{n}\left( \sigma ^{2}\right) =\frac{1}{2} e^{i\theta }\sin \phi U_{n-1} (\gamma ),
    \end{align*}
    where $T_{n}$ and $U_{n-1}$ are Chebyshev polynomials of the first kind and second kind, respectively and
    \begin{gather*}
        \gamma =\cos \theta \cos \phi -\sin \theta \sin \phi \cos 2k,\\
        \zeta =\sin \theta \cos \phi +\sin \phi \cos \theta \cos 2k.
    \end{gather*}
    \begin{proof}
        We can easily check that
        \begin{align*}
            e^{i\theta Z} U_{\sigma } e^{i\phi Z} U_{\sigma } & =\begin{bmatrix}
                                                                     P_{n}\left( \sigma ^{2}\right)                                  & i\sigma \sqrt{1-\sigma ^{2}} Q_{n}\left( \sigma ^{2}\right) \\
                                                                     i\sigma \sqrt{1-\sigma ^{2}} Q_{n}^{*}\left( \sigma ^{2}\right) & P_{n}^{*}\left( \sigma ^{2}\right)
                                                                 \end{bmatrix} \\
                                                              & =\begin{bmatrix}
                                                                     \gamma +i\zeta                  & ie^{i\theta }\sin \phi \sin 2k \\
                                                                     ie^{-i\theta }\sin \phi \sin 2k & \gamma -i\zeta
                                                                 \end{bmatrix} .
        \end{align*}
        We want to calculate $\left( e^{i\theta Z} U_{\sigma } e^{i\phi Z} U_{\sigma }\right)^{n}$. Eigenvalues of $e^{i\theta Z} U_{\sigma } e^{i\phi Z} U_{\sigma }$ are $e^{\pm i\lambda }$ with
        \begin{equation*}
            \cos \lambda =\gamma ,\ \sin \lambda =\sqrt{\zeta ^{2} +\sin^{2} \phi \sin^{2} 2k} .
        \end{equation*}
        Corresponding eigenvectors are
        \begin{align*}
             & \frac{1}{\sqrt{2\sin \lambda (\sin \lambda \mp \zeta )}}\begin{bmatrix}
                                                                           -e^{i\theta }\sin \phi \sin 2k \\
                                                                           \zeta \mp \sin \lambda
                                                                       \end{bmatrix} =\frac{1}{\sqrt{2\sin \lambda }}\begin{bmatrix}
                                                                                                                         e^{i\theta }\sqrt{\sin \lambda \pm \zeta } \\
                                                                                                                         \mp s\sqrt{\sin \lambda \mp \zeta }
                                                                                                                     \end{bmatrix}
        \end{align*}
        where $s$ is a sign function defined by
        \begin{equation*}
            s:=\begin{cases}
                1  & \sin \phi \geqslant 0, \\
                -1 & \sin \phi < 0.
            \end{cases}
        \end{equation*}
        The projector onto eigenspaces are calculated as
        \begin{equation*}
            P_{\pm } =\frac{1}{2\sin \lambda }\begin{bmatrix}
                \sin \lambda \pm \zeta             & \pm e^{i\theta }\sin \phi \sin 2k \\
                \pm e^{-i\theta }\sin \phi \sin 2k & \sin \lambda \mp \zeta
            \end{bmatrix} .
        \end{equation*}
        Thus by the spectral decomposition, we get
        \begin{align*}
            \left( e^{i\theta Z} U_{\sigma } e^{i\phi Z} U_{\sigma }\right)^{n} & =e^{in\lambda } P_{+} +e^{-in\lambda } P_{-}                                                        \\
                                                                                & =\begin{bmatrix}
                                                                                       T_{n}( \gamma ) +i\zeta U_{n-1} (\gamma )        & ie^{i\theta }\sin \phi \sin 2kU_{n-1} (\gamma ) \\
                                                                                       ie^{-i\theta }\sin \phi \sin 2kU_{n-1} (\gamma ) & T_{n}( \gamma ) -i\zeta U_{n-1} (\gamma )
                                                                                   \end{bmatrix}.
        \end{align*}
    \end{proof}
\end{proposition}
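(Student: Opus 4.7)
The plan is to diagonalise the $2\times 2$ matrix $M := e^{i\theta Z} U_\sigma e^{i\phi Z} U_\sigma$ explicitly, then compute $M^n$ through its Pauli-vector decomposition. First I would multiply out the four factors using $\sigma = \cos k$, $\sqrt{1-\sigma^2}=\sin k$ and the double-angle identities $2\sigma\sqrt{1-\sigma^2}=\sin 2k$ and $\sigma^2 - (1-\sigma^2) = \cos 2k$. After collecting real and imaginary parts one obtains the closed form
\[
M = \begin{bmatrix} \gamma + i\zeta & i e^{i\theta}\sin\phi\sin 2k \\ i e^{-i\theta}\sin\phi\sin 2k & \gamma - i\zeta \end{bmatrix},
\]
matching the definitions of $\gamma$ and $\zeta$ in the statement.

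Next I would verify that $M$ belongs to $SU(2)$: its trace is $2\gamma$ by inspection, and a short trigonometric computation (first $\gamma^2 + \zeta^2 = \cos^2\phi + \sin^2\phi\cos^2 2k$, then adding $\sin^2\phi\sin^2 2k$) gives $\det M = 1$. Hence the eigenvalues are $e^{\pm i\lambda}$ with $\cos\lambda = \gamma$ and $\sin\lambda = \sqrt{\zeta^2+\sin^2\phi\sin^2 2k}$. Writing $M = \cos\lambda\, I + i\sin\lambda\,(\hat{n}\cdot\vec{\sigma})$, where $\hat{n}\cdot\vec{\sigma}:=(M-\gamma I)/(i\sin\lambda)$ is a traceless Hermitian involution, the standard Pauli-vector identity yields
\[
M^n = \cos(n\lambda)\,I + i\sin(n\lambda)\,(\hat{n}\cdot\vec{\sigma}).
\]
Invoking the Chebyshev identities $\cos(n\lambda) = T_n(\gamma)$ and $\sin(n\lambda)/\sin\lambda = U_{n-1}(\gamma)$, reading off the diagonal entries gives $\Pi_n(\sigma^2) = T_n(\gamma) + i\zeta U_{n-1}(\gamma)$, and reading off the off-diagonal entries together with $\sin 2k = 2\sigma\sqrt{1-\sigma^2}$ delivers the stated formula for $\Phi_n(\sigma^2)$.

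The only real obstacle is careful trigonometric bookkeeping: confirming $\det M = 1$ relies on the identity $\cos^2\phi + \sin^2\phi(\cos^2 2k + \sin^2 2k) = 1$, and the conversion between $\sin 2k$ and $2\sigma\sqrt{1-\sigma^2}$ must be tracked to pin down the constant multiplying $U_{n-1}(\gamma)$ in $\Phi_n$. An alternative route is to construct explicit eigenvectors of $M$ and assemble the spectral projectors $P_\pm$ onto the two eigenspaces, then compute $M^n = e^{in\lambda}P_+ + e^{-in\lambda}P_-$; but the Pauli-vector viewpoint is slightly tidier because it sidesteps any sign ambiguity in the eigenvectors when $\sin\phi<0$. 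Either way, the homogeneity assumption $\theta_k\equiv\theta$, $\phi_k\equiv\phi$ is essential: it reduces the problem from unfolding the recursion for $(\Pi_n,\Phi_n)$ to a single $SU(2)$ matrix power, which is what allows Chebyshev polynomials to close the answer in finite form.
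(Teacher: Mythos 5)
Your proposal is correct and follows essentially the same route as the paper: both compute the one-step matrix in closed form as $\gamma I$ plus a traceless anti-Hermitian part, identify the eigenvalues $e^{\pm i\lambda}$ with $\cos\lambda=\gamma$, and evaluate $M^n$ via the Chebyshev identities $\cos n\lambda=T_n(\gamma)$, $\sin n\lambda=\sin\lambda\,U_{n-1}(\gamma)$; your Pauli-vector identity is the paper's spectral decomposition $e^{in\lambda}P_++e^{-in\lambda}P_-$ repackaged with $P_\pm=\tfrac12\bigl(I\pm\hat{n}\cdot\vec{\sigma}\bigr)$, so it even sidesteps the eigenvector sign bookkeeping the paper goes through. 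One caveat: carrying out your final matching $i\Phi_n(\sigma^2)\,\sigma\sqrt{1-\sigma^2}=ie^{i\theta}\sin\phi\sin 2k\,U_{n-1}(\gamma)$ with $\sin 2k=2\sigma\sqrt{1-\sigma^2}$ actually yields $\Phi_n(\sigma^2)=2e^{i\theta}\sin\phi\,U_{n-1}(\gamma)$ (consistent with $\Phi_1=Q_1=2e^{i\theta}\sin\phi$ from the recursion), so the coefficient $\tfrac12$ in the stated formula is a typo in the proposition rather than something your argument should be said to "deliver."
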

\section{Spectral mapping for QSVT operator\label{sec:SMT}}
In this last section, we consider the eigenvalues of the QSVT operator in each subspaces $\mathcal{L}_{0} ,\mathcal{L}_{1}, \mathcal{L}_{\sigma }$ and $\mathcal{L}^\perp$.
\begin{itemize}
    \item Eigenvalues and eigenvectors of $\prod _{k=1}^{n} W_{k}$ in $\mathcal{L}_{0}$:
\end{itemize}
\begin{equation*}
    \prod _{k=1}^{n} e^{i( \phi _{k} -\theta _{k})} ,\ B^{*} f_{0} ,\ f_{0} \in \operatorname{ker} A^{*}
\end{equation*}
\begin{itemize}
    \item Eigenvalues and eigenvectors of $\prod _{k=1}^{n} W_{k}$ in $\mathcal{L}_{1}$.
\end{itemize}
\begin{equation*}
    \prod _{k=1}^{n} e^{i( \theta _{k} +\phi _{k})} ,\ \ A^{*} f_{1} ,\ \ f_{1} \in \ker B^{*}.
\end{equation*}
\begin{itemize}
    \item Eigenvalues and eigenvectors of $\prod _{k=1}^{n} W_{k}$ in $\mathcal{L}_{\sigma }$ are $e^{\pm i\lambda }$ where
\end{itemize}
\begin{equation*}
    \cos \lambda =\Re ( \Pi _{n})\left( \sigma ^{2}\right).
\end{equation*}
\begin{itemize}
    \item In particular, for $n=1$ and $\theta _{1} =\theta ,\ \phi _{1} =\phi $, we have eigenvalues $e^{\pm i\lambda }$ with
\end{itemize}
\begin{equation*}
    \cos \lambda =\cos \theta \cos \phi -\sin \theta \sin \phi \cos 2k,
\end{equation*}
where $\sigma =\cos k$ and thier corrsponding eigenvectors are
\begin{equation*}
    \begin{bmatrix}
        \frac{A^{*}}{1-e^{i( \phi _{1} -\theta _{1})} e^{\pm i\lambda }} \\
        \frac{B^{*}}{1-e^{i( \theta _{1} +\phi _{1})} e^{\pm i\lambda }}
    \end{bmatrix} f_{\sigma } ,\ AA^{*} f_{\sigma } =\sigma ^{2} f_{\sigma } ,\ \sigma \in ( 0,1).
\end{equation*}
\begin{itemize}
    \item Eigenvalues and eigenvectors of $\prod _{k=1}^{n} W_{k}$ in $\mathcal{L}^{\perp }$ are $e^{\pm i\lambda }$ where
\end{itemize}
\begin{equation*}
    e^{i( \theta _{k} -\phi _{k})} ,\ \Phi _{1} \in \ker A\
\end{equation*}
and
\begin{equation*}
    e^{-i( \theta _{k} +\phi _{k})} ,\ \Phi _{2} \in \ker B.
\end{equation*}

\section*{Acknowledgment}
The author C. Kiumi was supported by JSPS KAKENHI Grant Number JP22KJ1408.
\bibliographystyle{elsarticle-num}
\bibliography{ref.bib}
\end{document}